\def\|{\,|\,}
\def\restr{\mathord\upharpoonright}
\renewcommand\ge{\geqslant}\renewcommand\le{\leqslant}
\newcommand\B{\mathcal B}
\newcommand\C{\mathcal C}
\newcommand\D{\mathcal D}
\newcommand\I{\mathcal I}
\newcommand\J{\mathcal J}
\newcommand\N{\mathbb N}
\newcommand\Sn{\mathcal S_n}
\renewcommand\phi{\varphi}
\DeclareMathAlphabet{\mathbfit}{OML}{cmm}{b}{it}
\def\barH{\hbox to 0pt{\kern 0.5ex$\overline{\rule[1.8ex]{1.8ex}{0pt}}$\hss}\mathbfit H}
\def\Hen{\mathbfit H}
\newcommand\eqdef{\stackrel{\mathrm{def}}{=}}
\newcommand\dn{\mathord{\downarrow}}
\newcommand\tsum{{\textstyle\sum}}
\newcommand\cups{\mathop{{\cup}^*}}
\newtheorem{theorem}{Theorem}
\newtheorem{definition}[theorem]{Definition}
\newtheorem{proposition}[theorem]{Proposition}
\newtheorem{claim}[theorem]{Claim}
\newtheorem{lemma}[theorem]{Lemma}
\newtheorem{conjecture}[theorem]{Conjecture}
\begin{document}
\title{Book inequalities}
\author{Laszlo Csirmaz%
\thanks{%
Central European University, Budapest, Hungary \par
 e-mail: {\tt csirmaz@renyi.hu}
\par\noindent
Research partially supported by TAMOP-4.2.2.C-11/1/KONV-2012-0001 and 
by the Lendulet program of the Hungarian Academy of Sciences}
}
\date{}
%%%%%%%%%%%%%%%%%%%%%%%%%%%%%%%%%%%%%%%%%%%%%%%%%%%%%%%%%%%%%%%%
\maketitle
\begin{abstract}
Information theoretical inequalities have strong ties with polymatroids and
their representability. A polymatroid is entropic if its rank function is
given by the Shannon entropy of the subsets of some discrete random
variables. The {\em book} is a special iterated adhesive extension of a
polymatroid with the property that entropic polymatroids have $n$-page book
extensions over an arbitrary spine. We prove that every polymatroid has an
$n$-page book extension over a single element and over an all-but-one-element 
spine. Consequently, for polymatroids on four elements, only book 
extensions over a two-element spine should be considered.
{F.} Mat\'u\v s proved that the
Zhang-Yeung inequalities characterize polymatroids on four elements which
have such a 2-page book extension. The {\em $n$-page book inequalities},
defined in this paper, are conjectured to characterize polymatroids on four
elements which have $n$-page book extensions over a two-element spine. We
prove that the condition is necessary; consequently every book inequality
is an information inequality on four random variables. Using computer-aided
multiobjective optimization, the sufficiency of the condition is verified up
to 9-page book extensions.

\noindent{\bf Keywords:} Entropy; information inequality; polymatroid;
adhesivity.

\noindent{\bf Classification numbers:} 05B35, 26A12, 52B12, 90C29, 94A17

\end{abstract}

%%%%%%%%%%%%%%%%%%%%%%%%%%%%%%%%%%%%%%%%%%%%%%%%%%%%%%%%%%%%%%%%%%%%%%%
%% 111111111111111111111111111111
%%%%%%%%%%%%%%%%%%%%%%%%%%%%%%%%%%%%%%%%%%%%%%%%%%%%%%%%%%%%%%%%%%%%%%%
\section{Introduction}\label{sec:intro}

The entropy function of $N$ random variables $\langle x_i:i\in
N\rangle$ maps the non-empty subsets $I\subseteq N$ to the Shannon entropy
$H(\xi_I)$ of the variable set $\xi_I=\langle x_i:i\in I\rangle$. 
The range of the entropy
function, a subset of the $2^N-1$-dimensional Euclidean space,
is denoted by $\Hen_N$. The closure 
$\barH_N$ (in the usual Euclidean topology) of
$\Hen_N$ is a closed, convex, pointed
cone, and $\Hen_N$ misses only some boundary points as shown in \cite{matus}.

The region $\barH_N$ is bounded by linear facets corresponding to the
Shannon entropy inequalities.
Hyperplanes cutting into the Shannon polyhedron and
containing all entropic points on one side are the {\em non-Shannon linear
information inequalities}. The first such inequality was found by Zhang and
Yeung \cite{zhang-yeung}. Later the list of such inequalities has been
extended significantly, see \cite{dougherty-etal,matus-infinite,xu-wang-sun}.
The method of Zhang and Yeung motivated the definition of {\em adhesive
extensions} of polymatroids by {F.} Mat\'u\v s in \cite{matus-adhesive}.
An alternative technique for generating non-Shannon inequalities
was suggested by {K.} Makarychev {\em et al}
\cite{MMRV}, which later was found to rely on the same extension property 
of entropic polymatroids \cite{kaced, matus-csirmaz}.

Section \ref{sec:polymatroids} recalls some notation and terminology related
to
polymatroids; for a detailed account, see \cite{lovasz}. Section \ref{sec:book-extension}
describes the {\em book}, a special iterated adhesive extension. 
Generalizing results from \cite{matus-adhesive} and \cite{matus-csirmaz}, we
prove that book extensions always exist when the spine of the book has one
element, or has all but one elements of the ground set. Sections
\ref{sec:conjecture} and \ref{sec:book-proof} concentrate on the case $N=4$. 
Section \ref{sec:conjecture} defines the collection of {\em book
inequalities}, which is conjectured to characterize polymatroids on four
elements which have $n$-page book extensions. In Section
\ref{sec:book-proof} we prove the necessary part of the conjecture, that is,
that book inequalities hold for polymatroids with $n$-page book extensions. 
As entropic polymatroids have this extension
property, book inequalities are, consequently, entropy inequalities. The book
inequalities contain, among others, one of the infinite lists of Mat\'u\v s
in \cite{matus-infinite}, the list of Dougherty et al \cite[Theorem
10]{dougherty-etal}, and provide infinitely many new information inequalities.
The sufficiency part of the conjecture is left as an open problem.

The collection of book inequalities along with the conjecture that they
characterize the book extensions were announced at the First Workshop on
Entropy and Information Inequalities held in Hong Kong, April 15--17, 2013.
After the conference {\em Randall Dougherty} (personal communication)
pointed out a misprint in the formulation of inequalities in
(\ref{eq:book2}), and supplied a proof for the correct version. In this
paper an alternate proof of his result is given along the way inequalities
in (\ref{eq:book1}) are proved.

%%%%%%%%%%%%%%%%%%%%%%%%%%%%%%%%%%%%%%%%%%%%%%%%%%%%%%%%%%%%%%%%%%%%%%%%%%
%% 2222222222222222222222222222222222222222222222222222222222222222222222
%%%%%%%%%%%%%%%%%%%%%%%%%%%%%%%%%%%%%%%%%%%%%%%%%%%%%%%%%%%%%%%%%%%%%%%%%%

\section{Definitions and notation}\label{sec:polymatroids}
Let $N$ be a finite set, and $g$ be a real-valued function on the non-empty
subsets of $N$. The pair $\langle g,N\rangle$ is a {\em polymatroid} if $g$
is non-negative and non-decreasing: that is, $0\le g(I)\le g(J)$ for $I\subseteq 
J\subseteq N$; and submodular:
$$
    g(I)+g(J)-g(I\cup J)-g(I\cap J) \ge 0, ~~~ I,J\subseteq N .
$$
Here $N$ is the {\em ground set}, and $g$ is the {\em rank function}. Polymatroids
and their rank functions are frequently identified. Shannon inequalities for
discrete random variables express the fact that an entropy function is 
a polymatroid. Polymatroids coming from
entropy functions are called {\em entropic}, and those in the closure of
entropic polymatroids are {\em almost entropic}.

\medskip
For $I\subseteq N$ let $\delta_I$ be the
$2^N-1$-dimensional unit vector whose $I$-coordinate is equal to 1, and all
other coordinates are 0. Writing
$$
  (I,J)\eqdef \delta_I + \delta_J - \delta_{I\cup J}-\delta_{I\cap J},
$$
the expression $(I,J)\cdot g$ can be interpreted as the scalar product of
$(I,J)$ with $g$, thus the submodularity of $g$ can be expressed as
$$
    (I,J)\cdot g \ge 0, ~\qquad ~I, J\subseteq N .
$$
We will also use other abbreviations for certain information theoretic 
expressions:
\begin{align*}
(I,J\|K) &\eqdef \delta_{I\cup K}+\delta_{J\cup K}-
\delta_{I\cup J\cup K}-\delta_{(I\cap J)\cup K},\\
{}[I,J,K,L\,] &\eqdef -(I,J)+(I,J\|K)+(I,J\|L)+(K,L).
\end{align*}

For any polymatroid $g$, 
$(I,J\|K)\cdot g\allowbreak\ge 0$ follows from submodularity and
monotonicity.
$[I,J,K,L\,]\cdot g\ge 0$ is the so-called {\em Ingleton inequality}
\cite{ingleton}, and it holds when $g$ is linearly representable over a
field, but not necessarily holds when $g$ is only (almost) entropic.

Following the usual practice, the union symbol is omitted as well as the
curly brackets around singletons. Thus, for example, $aI$ denotes the set
$\{a\}\cup I$, and
$$
   \big( [abcd\,]+(a,b\|c)+(a,c\|b)+(b,c\|a)\big)\cdot g \ge 0
$$
is an equivalent form of the
Zhang-Yeung inequality \cite{zhang-yeung} on  the four-element set 
$N=\allowbreak\{a,b,c,d\}$.
Additionally, we omit the commas in the Ingleton notation $[a,b,c,d\,]$ as
we did it above, and even the polymatroid $g$ is omitted
when it is clear from the context
which polymatroid we are referring to.

The symbol $\cups$ is used to emphasize that the sets whose union is taken
are disjoint.

\subsection{Operations on polymatroids}
This section recalls some basic operations on polymatroids and their
properties.

\subsubsection{Direct sum}
The {\em direct sum} of polymatroids $\langle g_i,N_i\rangle$ for
$i=1,\dots,n$ is the
polymatroid $\langle g,N\rangle$ where the ground set $N$ is the disjoint
union $N_1\cups\dots\cups N_n$, and for every $I_i\subseteq N_i$,
$i=1,\dots,n$ the value of $g$ is defined as
$$
     g(I_1\cups\dots\cups I_n) = g_1(I_1)+\cdots+g_k(I_n).
$$
\subsubsection{Independence}
Let $\langle g,N\rangle$ be a polymatroid, and $P_1$, $P_2$, $S$ be disjoint 
subsets of the ground set $N$. 
$P_1$ and $P_2$ are {\em independent over $S$} when
$(P_1,P_2\|S)=0$, that is, when
$$
   g(P_1S)+g(P_2S) - g(P_1P_2S)-g(S) = 0.
$$
In matroid terminology, $\langle P_1S,P_2S\rangle$ is a {\em modular pair of
$g$}. Let $P_1,\dots,P_n$ and $S$ be disjoint subsets of $N$. The $P_i$s
are {\em totally independent over $S$} if for any two disjoint subsets
$\{i_1,i_2,\dots,i_{n_1}\}$ and $\{j_1,j_2,\dots,j_{n_2}\}$ 
of the indices $1,2,\dots,n$
\begin{equation}\label{eq:independent}
( P_{i_1}P_{i_1}\dots P_{i_{n_1}},\, P_{j_1}P_{j_2}\dots P_{j_{n_2}}\|S)=0 .
\end{equation}
In this case the collection $\{P_1S,P_2S,\dots,P_nS\}$ is called a
{\em modular set}. 
We will use the notation $\langle i \rangle$ to denote the set 
$\{i_1,i_2\dots,i_{n_1}\}$, and $P_{\langle i\rangle}$ to denote the
disjoint union $\bigcup\{ P_i: i\in\langle i\rangle\}$.
Condition (\ref{eq:independent}) can be
written more succinctly as
$$
  (  P_{\langle i\rangle},P_{\langle j\rangle} \,|S ) =0 
$$
for disjoint subsets $\langle i\rangle$, $\langle j\rangle$ of
$\{1,2,\dots,n\}$.

\subsubsection{Restriction}
Restricting the rank function of the polymatroid $\langle g,N\rangle$ to the
subsets of $M\subseteq N$ gives the polymatroid $g\restr M$, the {\em
restriction of $g$ to $M$}; furthermore, $g$ is the {\em extension} of its
restrictions. Restricting an (almost) entropic polymatroid gives an (almost)
entropic polymatroid.

\subsubsection{Pullback}
Let $\phi$ map $N'$ into $N$, and let $g$ be a polymatroid on $N$.
The {\em pullback $\phi^{-1}g$} is the polymatroid defined on the ground 
set $N'$ by
$$
    (\phi^{-1}g)(I') = g(\phi(I')) ~~~~ \mbox{for all $I'\subseteq N'$}.
$$
Thus, for example, restricting $g$ to $M\subseteq N$ is the same as the
pullback $\mathrm{Id}_M^{-1}g$, where $\mathrm{Id}_M$ is the identity map on
$M$. Again, the pullback of an (almost) entropic polymatroid is
(almost) entropic.

\subsection{A technical lemma}
The following lemma describes a polymatroid construction. It will 
be used in the proof of the main result in Section \ref{sec:book-extension}.
\begin{lemma}\label{lemma:minimum-poly}
Let $\langle g,N\rangle$ be a polymatroid, $a\in N$, and $t\le g(a)$. Define
the function $h$ on the non-empty subsets $J\subseteq N$ as follows:
$$
  h(J)  = \min \,\{\,g(J),g(aJ)-t\,\} .
$$
Then, $\langle h,N\rangle$ is a polymatroid.
\end{lemma}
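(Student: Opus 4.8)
The plan is to verify the three defining properties of a polymatroid for $h$—non-negativity, monotonicity, and submodularity—with submodularity being the real content. Throughout write $g'(J)\eqdef g(aJ)-t$, so that $h(J)=\min\{g(J),g'(J)\}$, and set $d(J)\eqdef g(aJ)-g(J)$. Non-negativity is immediate: $g(J)\ge 0$ since $g$ is a polymatroid, while monotonicity of $g$ together with $t\le g(a)$ gives $g'(J)=g(aJ)-t\ge g(a)-t\ge 0$. Monotonicity of $h$ then follows from the general fact that the pointwise minimum of two non-decreasing functions is non-decreasing, applied to $g$ and $g'$ (both are clearly non-decreasing).

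For submodularity I would isolate two structural facts. First, $g'$ is itself submodular: using $aI\cup aJ=a(I\cup J)$ and $aI\cap aJ=a(I\cap J)$, the expression $(I,J)\cdot g'$ equals $(aI,aJ)\cdot g\ge 0$. Second—and this is the crucial ingredient—the conditional quantity $d$ is antitone, i.e. $d(I)\ge d(J)$ whenever $I\subseteq J$. This is the diminishing-returns form of submodularity: when $a\notin J$ it is $(aI,J)\cdot g\ge 0$ rewritten, when $a\in I$ both sides vanish, and the remaining case $a\notin I,\ a\in J$ follows from monotonicity since then $d(J)=0\le d(I)$. Equivalently, $g-g'=t-d$ is non-decreasing, so the set $D\eqdef\{J: g(J)\le g'(J)\}=\{J: d(J)\ge t\}$ is a down-set, and $h=g$ on $D$ while $h=g'$ off $D$.

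With these in hand the inequality $h(I)+h(J)\ge h(I\cup J)+h(I\cap J)$ reduces to a short case analysis driven by $D$. Writing $P=I\cap J$ and $Q=I\cup J$, if $Q\in D$ then all four sets lie in $D$ and the inequality is just submodularity of $g$; if $P\notin D$ then all four lie off $D$ and it is submodularity of $g'$. The remaining case is $P\in D$, $Q\notin D$, where $h(P)=g(P)$, $h(Q)=g'(Q)$, and $I,J$ may fall on either side. When both $I,J\in D$ one combines submodularity of $g$ with $g(Q)\ge g'(Q)$; when both lie off $D$ one combines submodularity of $g'$ with $g'(P)\ge g(P)$; and in the genuinely mixed subcase, say $I\in D$ and $J\notin D$, submodularity of $g'$ gives $g'(I)+g'(J)\ge g'(Q)+g'(P)$, after which the needed estimate reduces to $(g-g')(I)\ge(g-g')(P)$, which holds because $g-g'$ is non-decreasing and $P\subseteq I$.

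I expect the main obstacle to be the submodularity step, since the pointwise minimum of two submodular functions is not submodular in general; the whole argument hinges on the extra structure that $g-g'$ is monotone, which forces the switching locus $D$ to be a down-set and thereby tames the case analysis. Care is also needed to confirm that the antitonicity of $d$ holds across \emph{all} subsets of $N$, including those containing $a$, and not merely for $a$-avoiding sets.
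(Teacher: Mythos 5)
Your proof is correct, but it is organized quite differently from the paper's. The paper disposes of submodularity by splitting into four cases according to where the minimum is attained in $h(I)$ and in $h(J)$ only; in each case a single application of the submodularity of $g$ to a suitable pair of ($a$-augmented) sets --- e.g.\ to $I$ and $aJ$ in the mixed case, using $I\cup aJ=a(I\cup J)$ and $I\cap aJ\supseteq I\cap J$ --- combined with the trivial pointwise bounds $h(K)\le g(K)$ and $h(K)\le g(aK)-t$ at $K=I\cup J$ and $K=I\cap J$, finishes the case. In particular the paper never needs to know where the minimum is attained on the union or the intersection. You instead prove two structural facts --- that $g'(J)=g(aJ)-t$ is submodular, and that the marginal $d(J)=g(aJ)-g(J)$ is antitone (so $g-g'$ is non-decreasing and the switching locus $D=\{J: g(J)\le g'(J)\}$ is a down-set) --- and then run the case analysis on membership of $I\cap J$, $I\cup J$, $I$, $J$ in $D$. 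This is heavier but more conceptual: what you have really shown is the general fact that the pointwise minimum of two submodular functions is submodular whenever their difference is monotone, with the lemma as the special case $f_1=g$, $f_2=g(a\,\cdot)-t$. Your closing remark correctly identifies why a naive ``min of submodular functions'' argument fails and which extra structure rescues it; the paper's shortcut is that the specific form of $g'$ lets one apply submodularity of $g$ across the two arguments and bypass the down-set lemma entirely. All steps you outline (the three-way case split in the antitonicity of $d$, the reduction of the mixed subcase to $(g-g')(I)\ge(g-g')(I\cap J)$) check out.
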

\begin{proof}
The condition $t\le g(a)$ gives $g(aJ)=g(aJ)-t\ge 0$, thus $h$ is
non-negative. As the monotonicity of $h$ is clear, only the
subadditivity needs to be checked. Distinguishing four cases depending on
where the minimum is taken in $h(I)$ and $h(J)$, in each case 
the submodularity of $g$ entails
that their sum is at least as large as $h(I\cup J)+h(I\cap J)$.
\end{proof}

\subsection{Tightening}
Let $\langle g,N\rangle$ be a polymatroid, and $a\in N$. The polymatroid
$\langle g\dn a,N\rangle$ is defined as follows. For each $I\subseteq
N-\{a\}$,
\begin{align*}
   (g\dn a)(I)  &= g(I), \\
   (g\dn a)(aI) &= g(aI)-\big(g(N)-g(N-a)\big) . 
\end{align*}
Applying Lemma \ref{lemma:minimum-poly} with $t=g(N)-g(N-a)$, and observing
that $g(J)\le g(aJ)-t$ by submodularity of $g$, we see that
$g\dn a$ is indeed a polymatroid on $N$. Moreover, this
operation is idempotent: $(g\dn a)\dn a = g\dn a$, and commutative:
$(g\dn a)\dn b = (g\dn b)\dn a$. For subsets $J\subseteq N$ we
define $g\dn J$ as follows. If $J=\{a_1,\dots,a_k\}$, then we let
$$
 g\dn J = (\cdots((g\dn a_1)\dn a_2)\dn \cdots ) \dn a_k.
$$ 
By commutativity, the result depends only on the subset $J$ and not on 
the order of its elements. As $g=g\dn a$ if and only if
$g(N)=g(N-\{a\})$, it follows that $g=g\dn N$ if and only if every 
co-singleton has full rank. Such polymatroids are called {\em tight} in
\cite{matus-csirmaz}.

%%%%%%%%%%%%%%%%%%%%%%%%%%%%%%%%%%%%%%%%%%%%%%%%%%%%%%%%%%%%%%%%%%%%%%
%% 333333333333333333333333333333333333333333333333333333333333333333
%%%%%%%%%%%%%%%%%%%%%%%%%%%%%%%%%%%%%%%%%%%%%%%%%%%%%%%%%%%%%%%%%%%%%%

\section{Book extension}\label{sec:book-extension}

The notion of {\em adhesive extension}, introduced by {F.} Mat\'u\v s in
\cite{matus-adhesive}, captures the essence of the Zhang-Yeung method which
can be outlined as follows. Suppose that the rank function $g$ is given
by the Shannon entropy of the subsets of the random variables $\langle \vec
x,\vec s\rangle$. Using the terminology of Dougherty et al
\cite{dougherty-etal}, the collection of random variables $\vec y$ is a {\em
copy of $\vec x$ over $\vec s$} if $\vec x$ and $\vec y$ are independent
over $\vec s$; otherwise, $\langle \vec x,\vec s\rangle$ and $\langle\vec
y,\vec s\rangle$ have the same distribution. The polymatroid $h$ defined by
the entropies of the (subsets of the) random variables $\langle\vec x,\vec
y,\vec s\rangle$ extends $g$ in two different ways: $g$ can be embedded as
$\langle \vec x,\vec s\rangle$ or as $\langle \vec y,\vec s\rangle$, and
these instances of $g$ form a modular pair in $h$. Polymatroids with this
special embeddability property are called {\em self-adhesive at $\vec s$} in
\cite{matus-adhesive}.

In the above process we could add several independent copies of $\vec x$
instead of adding just a single copy. 
The {\em book extension} generalizes Mat\'u\v s' notion of
adhesivity along this line. 
This generalization, however, does not increase the
strength of the iterated method as $n$ consecutive copy steps over the same
set of variables give $2^n$ many totally independent copies of the pasted
variables.  

\begin{definition}[Book extension]\label{def:book}
Let $\langle g,P\cups S\rangle$ be a polymatroid.
$\langle h,M\rangle$ is an {\em $n$-page book extension of 
$g$ over $S$}, if the ground set of $h$ is the disjoint union
$M = P_1\cups\dots\cups P_n\cups S$ such that

\noindent
\hbox to1.5\parindent{\hss\textup{(i) }}%
$P_1,\dots,P_n$ are totally independent over $S$;

\noindent\hangindent1.5\parindent\hangafter1
\hbox to1.5\parindent{\hss\textup{(ii) }}%
for $i=1,2,\dots,n$ there are bijections $\phi_i:P\cups
S\leftrightarrow P_i\cups S$ which are identity on $S$ and
the pullback of $h$ along $\phi_i$ is $g$: $g=\phi_i^{-1}h$.

\noindent
We write $g\prec^n_S h$ to denote that $h$ is an $n$-page book extension of
$g$ over $S$.
\end{definition}

We use the picturesque name {\em book} for such an extension $h$. $S$ is
the {\em spine} of the book, and the $P_i$'s are its {\em pages}. 
A 2-page extension with spine $S$ is the same as the adhesive extension at
$S$ in \cite{matus-adhesive}. This book is not too interesting as all of its 
pages are the same, the interesting features come from the interaction
between the pages.

A book extension over an empty spine is the same as the direct sum, and when
$S$ is the full ground set, then there is no condition to satisfy. Moreover, as
every polymatroid is a
1-page book extension of itself,
we always assume that $n\ge 2$, and
the spine $S$ is a proper, non-empty subset of the ground set of $g$.
The following properties of the book extension follow immediately from the 
definition. 

\begin{proposition}\label{thm:book-basic}
{\rm a)} If $g\prec^k_S h$ and $h \prec^\ell_S h'$, then $g\prec^{k\ell}_S h'$.
{\rm b)} If $g\prec^n_S h$, and $h'$ is $h$ restricted to $S$ and 
$k$ of its pages, then $g\prec^k_S h'$. In particular, if $g$ has an
$n$-page book extension, then it has $\ell$-page extensions for every
$\ell< n$.
\qed
\end{proposition}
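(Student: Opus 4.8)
Both parts are verifications of the two clauses of Definition~\ref{def:book}; the only real content is the total independence in part~a). For part~b), write the kept pages as $P_{i_1},\dots,P_{i_k}$ and put $h'=h\restr(P_{i_1}\cups\dots\cups P_{i_k}\cups S)$. Every identity needed for $g\prec^k_S h'$ involves only subsets of the ground set of $h'$, where $h'$ agrees with $h$. The independence conditions $(P_{\langle i\rangle},P_{\langle j\rangle}\|S)=0$ demanded of the $k$ kept pages form a subfamily of those supplied by $g\prec^n_S h$, while each $\phi_{i_m}$ already maps $P\cups S$ into $P_{i_m}\cups S$, so $h'(\phi_{i_m}(I))=h(\phi_{i_m}(I))=g(I)$. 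Hence clauses (i) and (ii) survive the restriction, and the ``in particular'' statement follows by keeping any $\ell$ pages.

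For part~a), let the pages of $h$ be $P_1,\dots,P_k$ with bijections $\phi_i$, and the pages of $h'$ be $Q_1,\dots,Q_\ell$ with bijections $\psi_j$, where $Q=P_1\cups\dots\cups P_k$ is the non-spine part of the ground set of $h$. I would set $P_{j,i}\eqdef\psi_j(P_i)$, so that the ground set of $h'$ splits as $\bigcup_{j,i}P_{j,i}\cups S$, a candidate $k\ell$-page book over $S$. Clause (ii) is then immediate: $\psi_j\circ\phi_i$ is a bijection $P\cups S\leftrightarrow P_{j,i}\cups S$ fixing $S$, and since $\phi_i(I)\subseteq P_i\cups S\subseteq Q\cups S$ one gets $h'(\psi_j(\phi_i(I)))=h(\phi_i(I))=g(I)$, i.e.\ $g=(\psi_j\circ\phi_i)^{-1}h'$.

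The substance is clause~(i), that the pages $P_{j,i}$ are totally independent over $S$ in $h'$. My first step is to record that total independence of a family $\{R_r\}$ over $S$ is equivalent to the single factorization identity $h'(R_AS)-h'(S)=\sum_{r\in A}\big(h'(R_rS)-h'(S)\big)$ for all index sets $A$, which replaces ``all disjoint pairs'' by additivity. Two inputs feed in: within a copy, $\psi_j$ is an isomorphism onto $h'\restr(Q_j\cups S)$ fixing $S$, so it carries the total independence of $P_1,\dots,P_k$ in $h$ to that of $P_{j,1},\dots,P_{j,k}$ in $h'$; across copies, $Q_1,\dots,Q_\ell$ are totally independent over $S$ by hypothesis. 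Combining them requires that partial slices inherit cross-copy independence, namely that subsets $R_j\subseteq Q_j$ are totally independent over $S$ whenever the $Q_j$ are; this is the main obstacle, since for a non-entropic polymatroid probabilistic independence is unavailable. I would instead derive it from the identity $(A,BD\|S)=(A,B\|S)+(A,D\|BS)$ together with the nonnegativity of each conditional term for $h'$ (submodularity), which yields the decomposition $(A,BD\|S)=0\Rightarrow(A,B\|S)=0$, so independence is preserved on shrinking either argument. Applying the factorization identity first across the now-independent slices $P_{j,A_j}$ and then within each copy gives full additivity for the $k\ell$ pages, which is clause~(i); with (ii) this establishes $g\prec^{k\ell}_S h'$.
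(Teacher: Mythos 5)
Your proof is correct; the paper states this proposition without proof (``follow immediately from the definition''), so there is no alternative argument to compare against. Your write-up correctly supplies the one point that is not purely formal, namely that total independence over $S$ is inherited when each $Q_j$ is shrunk to a subset $R_j\subseteq Q_j$, which you justify via the chain-rule identity $(A,BD\|S)=(A,B\|S)+(A,D\|BS)$ and the nonnegativity of each conditional term, and the reduction of total independence to the additivity of $h'(R_AS)-h'(S)$ then combines the within-copy and across-copy independences as claimed.
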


Let $\langle h,M\rangle$ be an $n$-page extension of $\langle g,N\rangle$
over $S$ with bijection $\phi_i$ between $PS=P\cup S$ and $P_iS$. 
Any permutation $\pi$ of the page indices $\{1,2,\dots,n\}$ 
determines a permutation $\sigma_\pi$ of the ground set $M$ by keeping 
$S$ fixed, and by permuting the pages according to $\pi$:
$$
    \sigma_\pi(a)=\begin{cases}
      a & \text{if $a\in S$,}\\
      \phi_{\pi(i)}\phi^{-1}_i(a) & \text{if $a\in P_i$}.
\end{cases}
$$
Subsets $I$ and $J$ of $M$ are called {\em symmetrical} if $\sigma_\pi(I)=J$ 
for some permutation $\pi$ of the pages. This happens if and only if the 
following two conditions hold: $I$ and $J$ intersect the spine in the same 
set: $I\cap S=J\cap S$; and the $n$-element multisets $\{ \phi_i^{-1}(P_i\cap I) \}$ and $\{
\phi_i^{-1}(P_i\cap J)\}$, which consist of subsets of $P$ with multiplicity,
are the same. We call the extension $h$ {\em symmetrical} if symmetrical 
subsets have the same $h$-value.
\begin{proposition}\label{proposition:symmetric}
The polymatroid $g$ has an $n$-page extension if and only if it has such a
symmetrical extension.
\end{proposition}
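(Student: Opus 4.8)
The plan is to prove the substantive direction — that an arbitrary $n$-page extension can be replaced by a symmetrical one — by averaging the extension over the action of the page-permutation group; the converse is trivial, since a symmetrical extension is in particular an extension.

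The starting point is the observation that $\pi\mapsto\sigma_\pi$ is an action of $\Sn$ on $M$. Two elementary identities, both verified by a direct computation on each page $P_i$ (on $P_i$ one cancels factors of the form $\phi_{\rho(i)}^{-1}\phi_{\rho(i)}$, while $S$ is fixed throughout), carry the whole argument: first, $\sigma_\pi\circ\sigma_\rho=\sigma_{\pi\rho}$, so that $\pi\mapsto\sigma_\pi$ is a homomorphism; second, the compatibility of $\sigma_\pi$ with the copy maps, namely $\sigma_\pi\circ\phi_i=\phi_{\pi(i)}$ as maps from $P\cups S$ to $M$.

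Given an $n$-page extension $\langle h,M\rangle$ of $g$ over $S$ with bijections $\phi_1,\dots,\phi_n$, I would set
$$
   h^*\eqdef\frac1{n!}\sum_{\pi\in\Sn}\sigma_\pi^{-1}h,\qquad (\sigma_\pi^{-1}h)(I)=h(\sigma_\pi(I)).
$$
Since each $\sigma_\pi$ is a bijection of $M$, every summand $\sigma_\pi^{-1}h$ is a polymatroid, and as non-negativity, monotonicity and submodularity are preserved under convex combinations, $h^*$ is again a polymatroid on $M$. It then remains to check the two book-extension conditions and symmetry. Condition (i) is a system of linear equalities, and $\sigma_\pi$ only relabels the pages, so, writing $\pi\langle i\rangle$ for the image index set, $(P_{\langle i\rangle},P_{\langle j\rangle}\|S)\cdot\sigma_\pi^{-1}h=(P_{\pi\langle i\rangle},P_{\pi\langle j\rangle}\|S)\cdot h=0$, and averaging preserves the equalities. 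For condition (ii) I would keep the \emph{original} bijections $\phi_i$ and invoke compatibility together with the evident identity $\phi_i^{-1}(\sigma_\pi^{-1}h)=(\sigma_\pi\circ\phi_i)^{-1}h$:
$$
   \phi_i^{-1}h^*=\frac1{n!}\sum_{\pi\in\Sn}(\sigma_\pi\circ\phi_i)^{-1}h
   =\frac1{n!}\sum_{\pi\in\Sn}\phi_{\pi(i)}^{-1}h=\frac1{n!}\sum_{\pi\in\Sn}g=g .
$$
Finally, symmetry follows from the homomorphism property: if $J=\sigma_\rho(I)$, then $h^*(J)=\frac1{n!}\sum_{\pi}h(\sigma_{\pi\rho}(I))$, and reindexing the sum by $\pi\mapsto\pi\rho$, a bijection of $\Sn$, returns $h^*(I)$.

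The one point that genuinely needs care — and the only place where the argument could fail — is that the averaging must preserve the \emph{full} book structure, not merely the polymatroid axioms; concretely, condition (ii) has to survive with a single, fixed family of bijections rather than a $\pi$-dependent one. This is exactly what the identity $\sigma_\pi\circ\phi_i=\phi_{\pi(i)}$ secures, by permuting the copy maps among themselves so that the pullbacks $\phi_{\pi(i)}^{-1}h$ are all equal to $g$. Everything else reduces to the linearity of the defining (in)equalities and a bijective reindexing of $\Sn$.
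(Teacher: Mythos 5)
Your proposal is correct and follows essentially the same route as the paper: average the pullbacks $h\circ\sigma_\pi$ over all page permutations $\pi$ and observe that each summand, hence the convex combination, is still an $n$-page book extension with the original bijections $\phi_i$. The paper states this in one line; you merely supply the details (the identities $\sigma_\pi\circ\sigma_\rho=\sigma_{\pi\rho}$ and $\sigma_\pi\circ\phi_i=\phi_{\pi(i)}$, and the reindexing argument for symmetry), all of which check out.
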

\begin{proof}
Let $h$ be an $n$-page book extension of $g$. For any permutation $\pi$ of
the pages define the polymatroid $\pi h$ on $M$ so that
$(\pi h)(I) = h(\sigma_\pi(I))$.
This polymatroid is also an $n$-book extension of $g$ with 
the same bijections $\phi_i$, and consequently
$$
    \frac1{n!}\: \tsum_{\pi}\, \pi h 
$$
is again an $n$-page book extension of $g$, which is symmetrical.
\end{proof}

The next theorem is a generalization of \cite[Theorem
3]{matus-csirmaz}. It will be used in proving the main result 
of this section, Theorem \ref{thm:book-on-singletons}, and it
essentially
shows that for book extensions it is enough to consider tight 
polymatroids.

\begin{theorem}\label{thm:balanced}
{\rm a)} Suppose there is an $n$-page book extension of $\langle g,N\rangle$
over $S$. Then $g\dn N$ also has an $n$-page book extension over $S$.
{\rm b)} Suppose $\langle h',M\rangle$ is an $n$-page book extension of $g'=g\dn N$
over $S$. Then there is an $n$-page book extension $g\prec^n_S h$ such that
$h\dn M=h'$.
\end{theorem}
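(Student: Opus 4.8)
Write $c_b = g(N)-g(N-b)$ for the deficiency of $g$ at $b$, so that $(g\dn N)(I)=g(I)-\sum_{b\in I}c_b$. Part (b) is the easier, ``lifting'' direction, and I would prove it by adding a modular function. Given the book extension $h'$ of $g'=g\dn N$, define $h=h'+\hat c$, where $\hat c$ puts weight $c_{\phi_i^{-1}(a)}$ on each $a\in P_i$ and $c_s$ on each $s\in S$ (this is well defined because the $\phi_i$ fix $S$). Since adding a non-negative modular function preserves the polymatroid axioms and annihilates every expression $(\,\cdot,\cdot\|\cdot)$, the map $h$ is a polymatroid whose pages stay totally independent, and the pullback identity $h(\phi_i(I))=g'(I)+\sum_{b\in I}c_b=g(I)$ gives $g\prec^n_S h$. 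Because $h$ and $h'$ differ by $\hat c$ one has $h\dn M=h'\dn M$, so it only remains to note that a book extension of a tight polymatroid is tight: for $a\in P_iS$ diminishing returns give $0\le h'(M)-h'(M-a)\le h'(P_iS)-h'(P_iS-a)$, and the last marginal is the deficiency of the tight $g'$, hence $0$. Thus $h'\dn M=h'$ and $h\dn M=h'$.

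For part (a) I would tighten the given book $g\prec^n_S h$ in two stages, the pages first. The structural fact driving the first stage is that at a page element the book already ``sees'' the full base-deficiency: for $a=\phi_i(b)\in P_i$, writing $Q=P_{\langle j\rangle}$ for the union of the remaining pages, the pair $\langle P_iS,QS\rangle$ is modular, and combining $h(P_iS)+h(QS)=h(M)+h(S)$ with submodularity applied to $P_iS-a$ and $QS$ yields $h(M)-h(M-a)=h(P_iS)-h(P_iS-a)=c_b$ (the inequality $\le$ is diminishing returns, the reverse $\ge$ comes from the modular equality). Hence subtracting from $h$ the modular function with weight $h(M)-h(M-a)$ on each page element leaves a polymatroid $\tilde h$ --- the weight removed at each page element is exactly its minimal marginal, so monotonicity is not violated --- whose pages remain totally independent and which pulls back on each page to $g(I)-\sum_{b\in I\cap P}c_b$. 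In other words $\tilde h$ is an $n$-page book extension of $g\dn P$, the tightening of $g$ at all of its page elements.

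The spine is where the obstacle lies, and where naively tightening the whole book breaks down: at a spine element $s$ the book-deficiency $h(M)-h(M-s)$ may be \emph{strictly} smaller than $c_s$, so ordinary tightening under-tightens the pages. My plan is to over-tighten by $c_s$ using Lemma \ref{lemma:minimum-poly}: set $h^*(J)=\min\{\tilde h(J),\,\tilde h(sJ)-c_s\}$, which is a polymatroid because $c_s\le g(s)=\tilde h(s)$. Two verifications finish the step. For the pullback, when $s\in I$ one gets $h^*(\phi_i(I))=\tilde h(\phi_i(I))-c_s$, while when $s\notin I$ the diminishing-returns bound $g'(sI)-g'(I)\ge c_s$ (with $g'=g\dn P$ and $I\subseteq N-s$) forces the minimum to be $g'(I)$; together these say $\phi_i^{-1}h^*=g'\dn s$. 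For independence, the decisive point is that every set appearing in a relation $(P_{\langle i\rangle},P_{\langle j\rangle}\|S)$ contains the whole spine, hence contains $s$, and on such sets $h^*=\tilde h-c_s$ is a uniform downward shift; so the alternating sum is unchanged and remains $0$. Iterating over all $s\in S$ --- the relevant deficiency staying $c_s$ because tightening preserves the deficiencies at the other elements --- produces a book extension of $g\dn P\dn S=g\dn N$, as desired.
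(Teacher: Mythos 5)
Your proof is correct and, at its core, follows the same route as the paper's: part (b) is the paper's modular lift (performed in one step rather than by induction on the elements of $N$, which changes nothing essential), and in part (a) the spine elements are handled exactly as in the paper, via the over-tightening $\min\{\tilde h(J),\tilde h(sJ)-c_s\}$ of Lemma \ref{lemma:minimum-poly} together with the observation that every set occurring in a page-independence relation contains $S$ and hence $s$. The one place you genuinely reorganize the argument is at the page elements: you first establish the deficiency identity $h(M)-h(M-a)=g(N)-g(N-\phi_i^{-1}(a))$ from the modularity of the pair $\langle P_iS,\,P_{\langle j\rangle}S\rangle$ combined with submodularity, so that the ordinary tightening $h\dn(P_1\cups\dots\cups P_n)$ --- a single modular subtraction --- already pulls back to $g\dn P$ on every page simultaneously. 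The paper instead runs the min-construction through the twins $a_1,\dots,a_n$ one at a time and verifies, by essentially the same submodularity chain ($h_{\ell-1}(a_\ell J)-h_{\ell-1}(J)\ge h_{\ell-1}(a_\ell S)-h_{\ell-1}(S)\ge t$), that each step leaves the other pages' pullbacks untouched; the two computations are equivalent, but your identity makes it more transparent why over-tightening is needed only on the spine, where the book's deficiency at $s$ can fall strictly below $g(N)-g(N-s)$.
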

\begin{proof}
Part a) follows by induction on the number of elements in $N$ from the
following claim: if $a\in N$ and $g$ has an $n$-page book extension 
$\langle h,M\rangle$, 
then so has $g\dn a$. So fix $a\in N$, and the extension $g\prec^n_S h$.
Let $\phi_i$ be the bijection between $PS$ and $P_iS$, and let
$t = g(N)-g(N{-}a)$. Consider first the case when $a\in S$. Define $h'$ 
on the subsets of $M$ by
$$
   h'(J) = \min \,\{\, h(J),h(aJ)-t\,\} .
$$
This is a polymatroid by Lemma \ref{lemma:minimum-poly}, and
$g\dn a\prec^n_S h'$. Indeed, the $\phi_i$ pullback of $h'$ is $g\dn a$
trivially. Furthermore, as $a\in S$, 
$h'(P_{\langle i\rangle}S) = h(P_{\langle i\rangle}S)-t$ for
any subset $\langle i\rangle$ of $\{1,2,\dots,n\}$, thus $(P_{\langle
i\rangle},P_{\langle j\rangle}\|S )\cdot h=0$ implies $(P_{\langle
i\rangle},P_{\langle j\rangle}\|S )\cdot h'=0$, that is, the pages are
totally independent over $S$ in $h'$ as well.

In the second case $a\in P$. We denote $\phi_i(a)\in P_i$ by $a_i$, 
and call it the {\em twin of $a$}.
Let $h_0=h$, and 
define for $1\le\ell\le n$ the polymatroid $h_\ell$ on $M$ as follows:
$$
    h_\ell(J)=\min \,\{\, h_{\ell-1}(J), h_{\ell-1}(a_\ell J)-t\,\}.
$$
The following holds: the $\phi_i$ 
pullback of $h_\ell$ is $g\dn a$ when $i\le\ell$, and is $g$ otherwise; 
and the pages are totally independent over $S$ in $h_\ell$. This is
true for $\ell=0$, and we prove it by induction for all $\ell\le n$ below. 
Thus $h'=h_n$ is an $n$-page extension of $g\dn a$, which completes the
induction step for part a).

Suppose the above claim for $\ell-1$; pick $i\not=\ell$ and
$J\subseteq P_i$ arbitrarily. By submodularity and by the induction 
assumption
\begin{align*}
      & h_{\ell-1}(a_\ell J)-h_{\ell-1}(J) \ge {} \\
{}\ge{} & h_{\ell-1}(a_\ell P_iS)-h_{\ell-1}(P_iS) = {} \\
{}={}   & h_{\ell-1}(a_\ell S)-h_{\ell-1}(S) = g(aS)-g(S) \ge t,
\end{align*}
which proves $h_\ell(J)=h_{\ell-1}(J)$, that is, for $i\not=\ell$ the
$\phi_i$ pullbacks of $h_\ell$ and $h_{\ell-1}$ are the same. The
$\phi_\ell$ pullback of $h_\ell$ is clearly $g\dn a$. Finally,
the independence of the pages in $h_\ell$ follows from their
independence in $h_{\ell-1}$ and from
$$
h_\ell(P_{\langle i\rangle}S) = \begin{cases}
    \,h_{\ell-1}(P_{\langle i\rangle}S) & \text{ if $\ell\notin\langle
i\rangle$,}\\
    \,h_{\ell-1}(P_{\langle i\rangle}S)-t & \text{ if $\ell\in\langle
i\rangle$} .
  \end{cases}
$$

\smallskip

For part b), let $a\in N$, $g'=g\dn a$, and suppose $g'\prec^n_S h'$. We
claim the existence of a polymatroid $h$ with $g\prec^n_S h$
such that (i) if $a\in S$, then $h\dn a=h'$; and (ii) if $a\in P$, then
$h\dn a_1\dots a_n=h'$.
In case (i) first we check $h'=h'\dn a$.
The $\phi_i$-pullback of $h'$ is $g'$, $g'(N)=g'(N-\{a\})$, thus
\begin{align*}
   0 &\le h'(M)-h'(M-\{a\}) \\
     &\le h'(P_iS)-h'(P_iS-\{a\})\\
     & = g'(PS)-g'(PS-\{a\}) = 0,
\end{align*}
establishing $h'=h'\dn a$.
Now let 
$t=g(N)-g(N-\{a\})$, then $g'=g\dn a$ means that for every $I\subseteq
N-\{a\}$, $g(I) = g'(I)$, and $g(aI) = g'(aI)+t$.
Let us define the polymatroid $h$ on the ground set $M$ by
$$
\begin{array}{r@{}l }
    h(J) &{}= h'(J), \\[5pt]
    h(aJ) &{} = h'(aJ)+t,
\end{array}
~~ J\subseteq M-\{a\} .
$$
It is clear that $g$ is the $\phi_i$-pullback of $h$ and $h\dn a=h'\dn a =
h'$. To conclude that $g\prec^n_S h$,
only the independence of pages in $h$ should be checked.
To this end let $\langle i\rangle$ and $\langle
j\rangle$ be two disjoint non-empty subsets of $\{1,2\dots, n\}$.
Then
\begin{align*}
  &  h(P_{\langle i\rangle}S)+h(P_{\langle j\rangle}S) = 
     2t + h'(P_{\langle i\rangle}S)+h'(P_{\langle j\rangle}S) ={}\\
  & ~~ = 2t + h'(P_{\langle i\rangle}P_{\langle j\rangle}S) + h'(S) 
     = h(P_{\langle i\rangle}P_{\langle j \rangle}S)+h(S) .
\end{align*}
Here we used the facts that $a\in S$, and $P_{\langle i\rangle}$ and $P_{\langle
j\rangle}$ are independent in $h'$. This concludes part (i).

In case (ii), when $a\in P$, $h'=h'\dn a_1\dots a_n$ follows as above. Setting
$t=g(N)-g(N-\{a\})$, define the polymatroid $h$ by
$$
\begin{array}{r@{}l }
    h(J) &{}= h'(J),\\
    h(a_{\langle i \rangle}J)&{}= h'(a_{\langle i\rangle}J)+
    t\cdot | \langle i\rangle |
\end{array}
~~ J\subseteq M-\{a_1,\dots,a_n\},
$$
where $|\langle i\rangle |$ is the
cardinality of the set $\langle i\rangle$.
Now $h\dn a_1\dots a_n=h'\dn a_1\dots a_n=h'$, and as $a_i\in P_i$,
the independence also holds:
\begin{align*}
   & h(P_{\langle i\rangle}S)+h(P_{\langle j\rangle}S)= \\
   &~~= \big(h'(P_{\langle i\rangle}S)+t\cdot|\langle i\rangle|\big) + 
    \big(h'(P_{\langle j\rangle}S)+t\cdot|\langle j\rangle|\big)
    = {}\\
   &~~=\big(h'(P_{\langle i\rangle}P_{\langle j\rangle}S)+
      t\cdot(|\langle i\rangle|+|\langle j\rangle|)\big) +h'(S) = {}\\
   &~~= h(P_{\langle i \rangle}P_{\langle j\rangle}S)+h(S).
\end{align*}
Claim b) follows from (i) and (ii) by induction on the number of the 
elements of $N$.
\end{proof}

A {\em co-singleton} is a subset which misses only one element.

\begin{theorem}\label{thm:book-on-singletons}
Every polymatroid $\langle g,N\rangle$ has an $n$-page extension over 
singletons and co-singletons.
\end{theorem}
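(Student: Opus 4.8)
The plan is to treat the two kinds of spine by genuinely different constructions. For a \emph{singleton} spine $S=\{s\}$ (so that each page $P_i$ is a full copy of $P=N-\{s\}$) I would write the $n$-page extension down explicitly as the amalgam of the pages over $s$. Abbreviating $K_i\eqdef\phi_i^{-1}(J\cap P_i)\subseteq P$ and setting
\begin{equation*}
 u(J)\eqdef\tsum_i g(sK_i)-(n-1)g(s),\qquad
 v(J)\eqdef\tsum_i g(K_i),
\end{equation*}
I define $h(J)=u(J)$ when $s\in J$, and $h(J)=\min\{u(J),v(J)\}$ when $s\notin J$. This is manifestly symmetric in the pages, as Proposition~\ref{proposition:symmetric} permits. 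The value $u(J)$ is precisely the value forced on any $s$-containing set by total independence together with the pullback condition, so the only freedom is in the $s$-free sets, where the amalgam takes the smallest admissible value.

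Two of the three requirements are then quick, and monotonicity and nonnegativity are immediate because $u$ and $v$ are monotone and nonnegative. For the pullback, restricting to a single page ($K_j=\emptyset$ for $j\ne i$) gives $h(s\,\phi_i(K))=g(sK)$ and $h(\phi_i(K))=\min\{g(sK),g(K)\}=g(K)$, using $g(K)\le g(sK)$; hence $\phi_i^{-1}h=g$. For total independence I only need the sets $P_{\langle i\rangle}S$, which contain $s$, so there $h=u$; since a full page contributes $g(sP)$ and an empty one $g(s)$, one obtains the \emph{affine} expression $h(P_{\langle i\rangle}S)=g(s)+|\langle i\rangle|\,(g(sP)-g(s))$, and a one-line cancellation yields $(P_{\langle i\rangle},P_{\langle j\rangle}\|S)\cdot h=0$ for all disjoint $\langle i\rangle,\langle j\rangle$.

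The substantial point---and the step I expect to be the main obstacle---is that $h$ is submodular. I would verify the equivalent local criterion $h(Jx)+h(Jy)\ge h(Jxy)+h(J)$ for $x,y\notin J$, organised by the position of $x,y$ relative to the pages and to $s$. When $s\in J$ everything is computed by $u$, a sum of the submodular functions $K_i\mapsto g(sK_i)$ over disjoint pages and hence submodular; when $x=s$ the inequality reduces to monotonicity of the spine marginal $h(sJ)-h(J)=\max\{0,u(J)-v(J)\}$, which is non-increasing because $g(sK)-g(K)$ is. The two genuinely new cases are both $s$-free. If $x,y$ lie in the \emph{same} page, freezing the other pages turns $h$ into $\mathrm{const}+\min\{g(K_p),g(sK_p)-t\}$ with $t=(n-1)g(s)-\sum_{j\ne p}(g(sK_j)-g(K_j))$, and submodularity in $K_p$ is \emph{exactly} the content of Lemma~\ref{lemma:minimum-poly} (whose submodularity half, one checks, uses only submodularity of $g$ and so applies for this $t$ even when $t>g(s)$). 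If $x,y$ lie in \emph{different} pages, then $u$ and $v$ are each modular across the two pages, so the claim collapses to the elementary four-point inequality expressing concavity of $z\mapsto\min\{z,0\}$; here the crucial input is the submodular bound $g(sK\cup\{x\})-g(sK)\le g(K\cup\{x\})-g(K)$, guaranteeing that $u$ grows no faster than $v$ along each page. Assembling these cases shows $h$ is a polymatroid, so $g\prec^n_S h$.

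For a \emph{co-singleton} spine $S=N-\{a\}$ the amalgam is unwieldy, and here I would instead lean on Theorem~\ref{thm:balanced}: by part~b) it suffices to extend the tightening $g\dn N$, so I may assume $g$ is tight, i.e.\ $g(N-\{x\})=g(N)$ for every $x$. Tightness at $a$ gives $g(aS)=g(N)=g(S)$, so $a$ becomes a function of the spine and the $n$ twins collapse: I define $h$ as the pullback $\psi^{-1}g$ along the map $\psi\colon M\to N$ that fixes $S$ and sends every twin $a_i$ to $a$. Being a pullback, $h$ is automatically a polymatroid with $\phi_i^{-1}h=g$, and for disjoint $\langle i\rangle,\langle j\rangle$ every set in sight has $\psi$-image $N$, whence $(a_{\langle i\rangle},a_{\langle j\rangle}\|S)\cdot h=g(N)-g(S)=0$. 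Thus $g\dn N\prec^n_S h$, and Theorem~\ref{thm:balanced}~b) lifts this to $g$. The only real difficulty in the whole argument is therefore the submodularity of the singleton amalgam, and within it the cross-page estimate, which is exactly where the hypothesis that the spine is a single element is used.
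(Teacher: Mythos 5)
Your proposal is correct and, in both halves, coincides with the paper's own proof: the co-singleton argument (reduce to the tightening via Theorem~\ref{thm:balanced}(b), then pull back along the map collapsing all twins $a_i$ to $a$ and use $g(aS)=g(S)$) is identical, and your singleton ``amalgam'' $h$, equal to $u$ on $s$-containing sets and $\min\{u,v\}$ elsewhere, is exactly the polymatroid the paper constructs by taking the $n$-fold direct sum of $g$, identifying the $n$ copies of the spine element, and applying Lemma~\ref{lemma:minimum-poly} with $t=(n-1)g(a)$. The only divergence is packaging: the submodularity you establish by the four-way case analysis (which you rightly flag as the main labour, including the observation that the lemma's submodularity half survives $t>g(a)$) comes for free in the paper, since each of the three operations composing $h$ is already known to preserve polymatroids and the lemma's hypothesis $t\le n\,g(a)$ is satisfied outright for the identified spine element.
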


\begin{proof} 
First let $S=\{a\}$ and $P=N-\{a\}$. Let $\langle f,N^* \rangle$ be the direct
sum of $n$ disjoint copies of $\langle g,N\rangle$ where 
$N^* = N_1\cups\dots\cups N_n$. Denote $N_i-\{a_i\}$ by $P_i$ where $a_i\in N_i$ is
the copy of $a$, and let $M=P_1\cups\dots\cups P_n\cups\{a\}$. Define the 
map $\phi: N^*\to M$ so that $\phi(a_i)=a$, otherwise $\phi$ is the identity.
Applying Lemma \ref{lemma:minimum-poly} to the pullback $\phi^{-1}f$ 
and $t=(n-1)g(a)$ gives the polymatroid $h$ on $M$, which will be the required
extension. The independence of $P_1$, \dots, $P_n$ over $\{a\}$ follows from the
fact that
$$
    h(aP_{\langle i\rangle}) = g(a)+|\langle i\rangle|\big(g(aP)-g(a)\big).
$$
The restrictions of $h$ to $P_i\cup\{a\}$ are clearly isomorphic to $g$.

To prove the second claim of the theorem, suppose that $S\subset N$ is a 
co-singleton and $P=N-S=\{a\}$. By Theorem
\ref{thm:balanced}, it is enough to show that $g'=g\dn N$ has a
$k$-page extension. Let $M=\{a_1,\dots,a_n\}\cups S$, and let $\phi(a_i)=a$,
$\phi\restr S = \mathrm{Id}\restr S$. 
Define $h'$ on $M$ as the pullback $\phi^{-1} g'$.
Then, $h'$ is a polymatroid; moreover, $a_1,\dots,a_n$ are totally independent 
over $S$ as $h'(a_{\langle i\rangle}S) = g'(aS)=g'(S)$. Consequently, 
$g'\prec^n_S h'$, which was to be shown.
\end{proof}

%%%%%%%%%%%%%%%%%%%%%%%%%%%%%%%%%%%%%%%%%%%%%%%%%%%%%%%%%%%%%%%%%%%%%%%%
%% 44444444444444444444444444444444444444444444444444444444444444444444
%%%%%%%%%%%%%%%%%%%%%%%%%%%%%%%%%%%%%%%%%%%%%%%%%%%%%%%%%%%%%%%%%%%%%%%

\section{Book inequalities}\label{sec:conjecture}

\newcommand\vv{\mathbf{v}}

From this section on, we concentrate on polymatroids on a four-element ground
set $N$, whose elements will be denoted by the letters $a$, $b$, $c$, and $d$.
The structure of these polymatroids with a special emphasis on entropic
representability have been studied extensively in
\cite{Chan.progress, dougherty-six, dougherty-etal, 
matroid-bounds, matus-csirmaz, matus-studeny, ww, xu-wang-sun}.
According to Theorem \ref{thm:book-on-singletons}, every polymatroid on a
four-element set
has book extensions over singletons and over three-element subsets.
Existence of book extensions over the two-element subset $S=\{a,b\}$
can be characterized in terms of linear inequalities: such
a polymatroid $g$ has 
an $n$-page book extension over $S$ if and only if it satisfies a 
certain collection of linear inequalities. As the existence of 
a book extension is invariant for permutations of the ground set which
keep the spine $S$ fixed, this characterizing set of inequalities is also
invariant under these permutations. In this case the stabilizer of the spine
$S=\{a,b\}$ is generated by two permutations, those which swap 
$a\leftrightarrow b$ and $c\leftrightarrow d$, respectively; thus, the
collection of inequalities is invariant under these swaps of variables. 
For the 2-page case {F.} Mat\'u\v s provided the following
characterization.

\def\mcite{\cite[Theorem 3]{matus-adhesive}}
\begin{theorem}[Mat\'u\v s \protect\mcite]\label{thm:matus-2book}
A polymatroid $g$ on the four-element set $abcd$ has a 2-page extension over $ab$ if and only if 
the following instances of the Zhang-Yeung inequality, and their
$a\leftrightarrow b$ and $c\leftrightarrow d$ versions, hold for $g$:
\begin{alignat*}{2}
&[abcd\,] + (a,b\|c) + (a,c\|b) + (b,c\|a) &&\ge 0,\\ 
&[bdac\,] + (a,b\|d) + (b,d\|a) + (a,d\|b) &&\ge 0.
\end{alignat*}
\end{theorem}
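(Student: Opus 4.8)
The plan is to prove the two implications separately, after fixing notation for the would-be extension. Write the ground set of a 2-page book as $M=\{a,b\}\cups\{c_1,d_1\}\cups\{c_2,d_2\}$, with spine $S=\{a,b\}$, pages $P_i=\{c_i,d_i\}$, and bijections $\phi_i$ fixing $S$ and sending $c\mapsto c_i$, $d\mapsto d_i$. By Definition \ref{def:book}, $g\prec^2_S h$ means $g=\phi_i^{-1}h$ for $i=1,2$ together with the single independence relation $(P_1,P_2\|S)\cdot h=0$. Submodularity of $h$ then upgrades this to $(A_1,A_2\|S)\cdot h=0$ for all $A_1\subseteq P_1$, $A_2\subseteq P_2$, since $0\le(A_1,A_2\|S)\cdot h\le(P_1,P_2\|S)\cdot h$.

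For necessity I would assume $g\prec^2_S h$ and lift Zhang and Yeung's original derivation into $h$, using the second page $P_2$ as their auxiliary copy of $cd$ over the spine. Concretely, I would exhibit the Zhang--Yeung functional evaluated on $g$ --- which, via $g=\phi_1^{-1}h$, equals the same functional on $h$ in the variables $a,b,c_1,d_1$ --- as a nonnegative combination of instances of $(I,J)\cdot h\ge0$ and $(I,J\|K)\cdot h\ge0$ for subsets $I,J,K\subseteq M$, plus a multiple of the independence equation $(P_1,P_2\|S)\cdot h=0$. Each submodularity term is nonnegative and the independence term vanishes, so the inequality follows for $g$; the $c\leftrightarrow d$ and $a\leftrightarrow b$ versions are then immediate from invariance under the stabilizer of $S$. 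Finding the precise Shannon certificate is a bounded search mirroring the four-variable proof with $P_2$ supplying the copy.

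Sufficiency is the hard direction, and here I would construct $h$ explicitly. Two reductions come first: by Proposition \ref{proposition:symmetric} it suffices to produce a page-symmetric $h$, and by Theorem \ref{thm:balanced} existence of an extension is unchanged under replacing $g$ by $g\dn N$, so after checking that the hypotheses transfer one may assume $g$ is tight, which collapses many conditional terms. I would then set $h$ on subsets of each $P_iS$ equal to $g$ through $\phi_i$; on subsets $A_1A_2S$ containing the whole spine by the conditional-independence rule $h(A_1A_2S)=g(\phi_1^{-1}(A_1)S)+g(\phi_2^{-1}(A_2)S)-g(S)$; and on the remaining subsets, those meeting both pages but missing an element of the spine, by an explicit minimum-type formula in the spirit of Lemma \ref{lemma:minimum-poly}, chosen as large as monotonicity permits. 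It then remains to verify that $h$ is monotone, nonnegative, and submodular.

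The main obstacle is exactly this last verification, because it is where the inequalities enter. Reducing submodularity of $h$ to its elemental instances leaves a finite list; those supported inside a single $P_iS$ follow from submodularity of $g$, those respecting the spine follow from the independence rule, and the residual ones --- coming from the partial-spine mixed subsets --- are precisely the listed Zhang--Yeung inequalities together with their $a\leftrightarrow b$ and $c\leftrightarrow d$ images. Equivalently, one may Fourier--Motzkin eliminate the free mixed values of $h$ and show that the projection of the cone of polymatroids on $M$ onto the coordinates of $g$ is cut out by exactly this family; the delicate point is proving that the minimizing choice of the mixed values is simultaneously consistent with all elemental inequalities, so that no extra facet appears and none of the stated inequalities is strictly stronger than needed.
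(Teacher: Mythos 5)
This theorem is not proved in the paper at all: it is imported verbatim from Mat\'u\v s (the citation in the theorem header), so there is no internal proof to compare yours against. The closest internal material is the necessity half: the paper's Theorem \ref{thm:book} with $n=2$ re-derives it, since $\B_2$ consists of exactly the two displayed inequalities, and the ``precise Shannon certificate'' your second paragraph defers to a bounded search is written out explicitly as Lemma \ref{lemma:crucial} with $k=\ell=0$ (the two seven-line tables in its proof are that certificate). Your setup for this half is correct --- in particular the observation that the single relation $(P_1,P_2\|S)\cdot h=0$ propagates to all $(A_1,A_2\|S)\cdot h=0$ by monotonicity of conditional mutual information is right and is used implicitly throughout Section \ref{sec:book-proof} --- but as written the necessity argument asserts that a nonnegative combination exists rather than exhibiting one.

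The genuine gap is in sufficiency, which is the substantive half of Mat\'u\v s's theorem and the part your proposal does not carry out. After the legitimate reductions (symmetry via Proposition \ref{proposition:symmetric}, tightness via Theorem \ref{thm:balanced} --- where ``checking that the hypotheses transfer,'' i.e.\ that the two inequalities for $g$ imply them for $g\dn N$, is itself a nontrivial step you leave open) and after the forced values $h(A_1A_2S)=g(\phi_1^{-1}(A_1)S)+g(\phi_2^{-1}(A_2)S)-g(S)$, everything hinges on the claim that some assignment of the remaining mixed ranks (sets meeting both pages but missing $a$ or $b$) is simultaneously monotone and submodular precisely when the two Zhang--Yeung instances and their swapped versions hold. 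You state this as the outcome of a Fourier--Motzkin elimination, but you give no argument that the elimination yields no further facets, nor that a single ``as large as monotonicity permits'' choice satisfies all elemental submodularities at once --- a greedily maximal value for one mixed set can violate submodularity against another, and reconciling these constraints is exactly where the hypothesis must be invoked. Without that consistency argument the sufficiency direction is a restatement of what is to be proved, not a proof; this is why the paper cites Mat\'u\v s for it rather than reproving it, and why the analogous sufficiency statement for $n\ge3$ remains a conjecture verified only computationally.
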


Using computer-aided multiobjective optimization, the characterizing
collection of linear inequalities were generated for up to 9-page book
extensions. Based on these experiments, the collection of {\em $n$-page book
inequalities} is defined below, and it
is conjectured to characterize polymatroids which have $n$-page book
extensions over $ab$. In Section \ref{sec:book-proof} these inequalities are
shown to hold for such polymatroids, thus they are information inequalities.
The sufficiency of the characterization is left as an open problem.

The description of the book inequalities is rather involved. The set of
non-negative integers $\{0,1,\dots\}$ is denoted by $\N$. Among the finite
subsets of the non-negative lattice points $\N\times\N$, the following
subsets will be of particular interest for integers $n\ge 2$:
\begin{align*}
 u_n &\eqdef \{ \langle k,0\rangle\in\N\times\N \,:\, k\le n-2 \}, \\
 v_n &\eqdef \{ \langle 0,\ell\rangle\in\N\times\N \,:\, \ell\le n-2 \}, \\
 t_n &\eqdef \{ \langle k,\ell\rangle\in\N\times\N \,:\, k+\ell\le n-2 \}.
\end{align*}
For $k,\ell\in\N$ the three-dimensional integer vector $\vv_{k,\ell}$ is
defined as
$$
    \vv_{k,\ell} = \binom{k+\ell}{k}\langle 1,k+1,\ell\rangle .
$$
For example, $\vv_{k,0}=\langle 1,k+1,0\rangle$, 
$\vv_{0,\ell}=\langle1,1,\ell\rangle $ and $\vv_{k-1,1}=\langle
k,k^2,k\rangle$, $\vv_{1,\ell-1}=\langle \ell,2\ell,\ell^2-\ell\rangle$.
For a finite subset $s$ of the lattice points let $\vv_s$ be the sum of the
vectors $\vv_{k,\ell}$ when $\langle k,\ell\rangle$ runs over $s$:
$$
    \vv_s \eqdef
        \tsum \,\{ \vv_{k,\ell} \,:\, \langle k,\ell\rangle \in s \} .
$$
This value, computed for the subsets $u_n$, $v_n$ and $t_n$, gives
\begin{align*}
\vv_{u_n} &=\tsum_{k\le n-2} \vv_{k,0} = \langle n-1,n(n-1)/2,0\rangle,\\
\vv_{v_n} &=\tsum_{\ell\le n-2} \vv_{0,\ell}=\langle n-1,n-1,(n-1)(n-2)/2\rangle ,\\
\vv_{t_n} &= \langle2^{n-1}-1,(n-1)2^{n-2},(n-2)2^{n-2}+1\rangle.
\end{align*}
The value $\vv_s$ is symmetrical in the following sense: if $\vv_s=\langle
x_s,y_s,z_s\rangle$ and $s^T$ is 
the transpose of $s$, that is
$s^T=\{\langle\ell,k\rangle: \langle k,\ell\rangle\in s\}$, 
then $\vv_{s^T}=\langle x_s,z_s+x_s,y_s-x_s\rangle$.

\begin{figure}[h!tb]
\begin{center}\begin{tikzpicture}[scale=0.6]
\begin{scope}[xshift=0cm]
\draw[->] (0,0)--(0,1.1); \draw[->] (0,0)--(1.1,0);
\foreach \x/\y in {0/0,0/0.4,0/0.8}{\draw[fill] (\x,\y) circle (2.5pt); }
\foreach \x/\y in {0.4/0,0.4/0.4,0.8/0}{\draw (\x,\y)circle(2.5pt); }
\draw (0.4,-0.5) node{$v_4$};
\end{scope}
\begin{scope}[xshift=1.5cm]
\draw[->] (0,0)--(0,1.1); \draw[->] (0,0)--(1.1,0);
\foreach \x/\y in {0/0,0/0.4,0/0.8,0.4/0}{\draw[fill] (\x,\y) circle (2.5pt); }
\foreach \x/\y in {0.4/0.4,0.8/0}{\draw (\x,\y)circle(2.5pt); }
\end{scope}
\begin{scope}[xshift=3cm]
\draw[->] (0,0)--(0,1.1); \draw[->] (0,0)--(1.1,0);
\foreach \x/\y in {0/0,0/0.4,0/0.8,0.4/0,0.4/0.4}{\draw[fill] (\x,\y) circle
(2.5pt); }
\foreach \x/\y in {0.8/0}{\draw (\x,\y)circle(2.5pt); }
\end{scope}
\begin{scope}[xshift=4.5cm]
\draw[->] (0,0)--(0,1.1); \draw[->] (0,0)--(1.1,0);
\foreach \x/\y in {0/0,0/0.4,0/0.8,0.4/0,0.4/0.4,0.8/0}{\draw[fill] (\x,\y) circle
(2.5pt); }
\draw (0.4,-0.5) node{$t_4$};
\end{scope}
\begin{scope}[xshift=6cm]
\draw[->] (0,0)--(0,1.1); \draw[->] (0,0)--(1.1,0);
\foreach \x/\y in {0/0,0/0.4,0.4/0,0.4/0.4,0.8/0}{\draw[fill] (\x,\y) circle
(2.5pt); }
\foreach \x/\y in {0/0.8}{\draw (\x,\y)circle(2.5pt); }
\end{scope}
\begin{scope}[xshift=7.5cm]
\draw[->] (0,0)--(0,1.1); \draw[->] (0,0)--(1.1,0);
\foreach \x/\y in {0/0,0/0.4,0.4/0,0.8/0}{\draw[fill] (\x,\y) circle (2.5pt); }
\foreach \x/\y in {0/0.8,0.4/0.4}{\draw (\x,\y)circle(2.5pt); }
\end{scope}
\begin{scope}[xshift=9cm]
\draw[->] (0,0)--(0,1.1); \draw[->] (0,0)--(1.1,0);
\foreach \x/\y in {0/0,0.4/0,0.8/0}{\draw[fill] (\x,\y) circle (2.5pt); }
\foreach \x/\y in {0/0.4,0/0.8,0.4/0.4}{\draw (\x,\y)circle(2.5pt); }
\draw (0.4,-0.5) node{$u_4$};
\end{scope}
\begin{scope}[xshift=10.5cm]
\draw[->] (0,0)--(0,1.1); \draw[->] (0,0)--(1.1,0);
\foreach \x/\y in {0/0,0/0.4,0.4/0,0.4/0.4}{\draw[fill] (\x,\y) circle (2.5pt); }
\foreach \x/\y in {0.8/0,0/0.8}{\draw (\x,\y)circle(2.5pt); }
\end{scope}
\begin{scope}[xshift=12cm]
\draw[->] (0,0)--(0,1.1); \draw[->] (0,0)--(1.1,0);
\foreach \x/\y in {0/0,0/0.4,0.4/0,0.8/0,0/0.8}{\draw[fill] (\x,\y) circle (2.5pt); }
\foreach \x/\y in {0.4/0.4}{\draw (\x,\y)circle(2.5pt); }
\end{scope}
\end{tikzpicture}\end{center}
\kern -15pt
\caption{Subsets in $\mathcal S_4 - \mathcal S_3$}\label{fig:S4minusS3}
\end{figure}
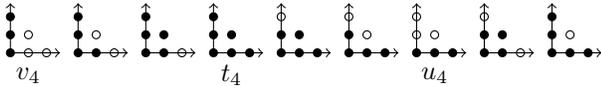
The subset $s$ of the lattice points is {\em downward closed} if from 
$\langle
k,\ell\rangle\in s$ and $0\le k'\le k$, $0\le\ell'\le \ell$, it follows that
$\langle k',\ell'\rangle\in s$. In particular, $u_n$, $v_n$ and $t_n$
are downward closed sets.
For $n\ge 2$ let us define $\Sn$ as the collection of downward closed
subsets of $t_n$:
$$
    \Sn\eqdef \big\{s\subseteq t_n \,: \text{ $s$ is downward closed }
\big\}.
$$
$\Sn$ is a subset of $\mathcal S_{n+1}$, and
$u_n$, $v_n$, $t_n$ are elements of $\Sn$. The family $\mathcal S_2$
has a single one-element subset $\{\langle0,0\rangle\}$, which is the same
as $u_2=v_2=t_2$. $\mathcal S_3$ has
three additional subsets: $u_3$, $v_3$, and $t_3$. Elements
of $\mathcal S_4-\mathcal S_3$ are depicted on Figure \ref{fig:S4minusS3},
the subsets $u_4$, $v_4$, and $t_4$ are marked.
\begin{definition}[Book inequalities]\label{def:book-ineq}
Let $n\ge 2$. The collection $\B_n$ of {\em $n$-page book inequalities}
is the following set of inequalities on polymatroids on the four-element set $abcd$:
\begin{align}\label{eq:book1}
   & x_s[abcd\,]  + (a,b\|c) + y_s\big((a,c\|b)+(b,c\|a)\big)\nonumber\\
   &~~~~+    z_s\big((a,d\|b)+(b,d\|a)\big)\ge 0,
\end{align}
where $\langle x_s,y_s,z_s\rangle=\vv_s$  and $s$ runs over the set $\Sn$;
plus the inequalities
\begin{equation}\label{eq:book2}
   \ell\,[bdac] + (a,b\|d)+\frac{\ell(\ell+1)}2\big(
            (b,d\|a)+(a,d\|b)\big) \ge0 .
\end{equation}
where $\ell=1,2,\dots,n-1$ .
\end{definition}

The collection of $n$-page book inequalities is increasing: every 
inequality in $\B_n$ is in $\B_{n+1}$ as well. $\B_2$ consists of the two
inequalities which appeared in Theorem \ref{thm:matus-2book}; this is so
as $\mathcal S_2$ has the only element $\{\langle 0,0\rangle\}$, and
$\vv_{0,0}=\langle 1,1,0\rangle$.
The sequence $B_n$ contains two previously identified infinite 
lists of entropy inequalities. Setting $s=u_n\in \Sn$, inequality
(\ref{eq:book1}) becomes
$$
   (n-1)[abcd\,] + (a,b\|c) + \frac{n(n-1)}2 \big((a,c\|b)+(b,c\|a)\big)
      \ge 0 ,
$$
which is one of the (implicit) infinite families of new entropy
inequalities from \cite[Theorem 2]{matus-infinite}. When $s$ is
$t_n\in\Sn$, then (\ref{eq:book1}) becomes
\begin{align*}
& (2^{n-1}-1)[abcd\,] + (a,b\|c) \\
&~~~~ +  (n-1)2^{n-2}\big((a,c\|b)+(b,c\|a)\big) \\
&~~~~ + \big((n-2)2^{n-2}+1\big)\big((a,d\|b)+(b,d\|a)\big) \ge 0,
\end{align*}
which is the inequality of \cite[Theorem 10]{dougherty-etal}.
Another interesting infinite family of inequalities arises from $v_n\in
\Sn$:
\begin{align*}
& (n-1)[abcd\,] + (a,b\|c) \\ 
&~~~~ +  (n-1)\big((a,c\|b)+(b,c\|a)\big) \\
&~~~~ + \frac{(n-1)(n-2)}2\,\big((a,d\|b)+(b,d\|a)\big) \ge 0,
\end{align*}
and several others can be constructed easily.
Some of the inequalities in $\B_n$ are redundant: they are consequences
of others. For example, $\B_4$ contains 12 inequalities, eight of them
come from the downward closed subsets depicted on Figure
\ref{fig:S4minusS3}. The $\langle x_s,y_s,z_s\rangle$ coefficients in
the order above are (3,3,3), (4,5,3), (6,9,5), (7,12,5), (6,11,3), (4,7,1),
(3,6,0), (5,8,3),
and (5,8,3). The inequality coming from the last two triplets is a
consequence of the others, as it is just the average of the inequalities
coming from the triplets (4,5,3), and (6,11,3).
It is not difficult to eliminate the redundant inequalities from $
\B_n$ but their description is cumbersome, so we skipped this step.
\begin{figure}
\begin{center}\hbox to \columnwidth{\hss%
\begin{tikzpicture}
\begin{scope}
\useasboundingbox(-0.2,-0.2) rectangle (7.3,7.3);
\clip(-0.1,-0.1) rectangle(7.2,7.2);
\input e4.tex
\end{scope}
\draw (0,0) node[below]{$\scriptstyle1$};
\draw (1.520,0) node[below]{$\scriptstyle3/2$};
\draw (2.599,0) node[below]{$\scriptstyle2$};
\draw (3.436,0) node[below]{$\scriptstyle5/2$};
\draw (4.120,0) node[below]{$\scriptstyle3$};
\draw (5.199,0) node[below]{$\scriptstyle4$};
\draw (6.035,0) node[below]{$\scriptstyle5$};
\draw (6.719,0) node[below]{$\scriptstyle6$};
\draw (0,0) node[left]{$\scriptstyle0$};
\draw (0,1.520) node[left]{$\scriptstyle1/2$};
\draw (0,2.599) node[left]{$\scriptstyle1$};
\draw (0,3.436) node[left]{$\scriptstyle3/2$};
\draw (0,4.120) node[left]{$\scriptstyle2$};
\draw (0,5.199) node[left]{$\scriptstyle3$};
\draw (0,6.035) node[left]{$\scriptstyle4$};
\draw (0,6.719) node[left]{$\scriptstyle5$};
\end{tikzpicture}%
\hss}\end{center}
\kern -15pt
\caption{Coefficients $\langle y_s/x_s,z_s/x_s\rangle$ on a logarithmic scale}\label{fig:conjecture-figure}
\end{figure}
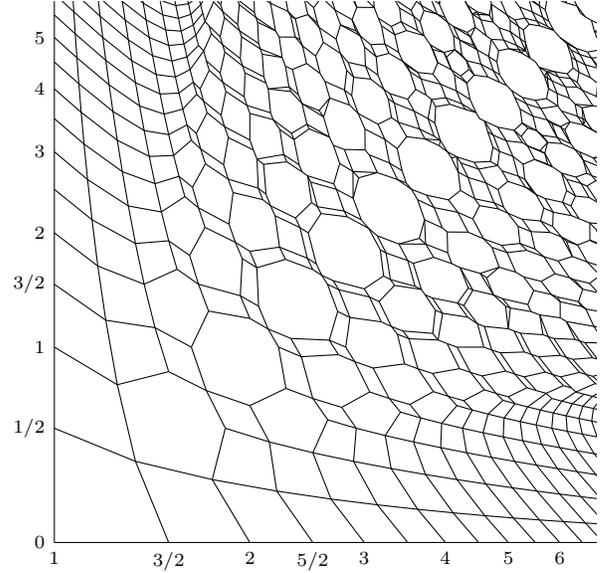
Figure \ref{fig:conjecture-figure} shows nodes
$(y_s/x_s,z_s/x_s)$ on a logarithmic scale, 
where $\langle x_s,y_s,z_s\rangle$ are
coefficients in non-redundant inequalities. Two such
nodes are connected by a straight line when the corresponding 
subsets $s\in\Sn$ differ by a
single element only. Nodes on the horizontal and vertical bounding lines come from
the sets $u_n$, $v_n$, respectively; $t_n$ gives the nodes along the diagonal.
The symmetry of the figure comes from the symmetry of $\vv_s$ observed
earlier.

\begin{conjecture}[Book conjecture]\label{conj:book}
A polymatroid $g$ on the four element set $abcd$ has an $n$-page 
book extension at $ab$ if and only if $g$ satisfies the $n$-page book 
inequalities in $\B_n$ and their versions where the variables
$a\leftrightarrow b$ and $c\leftrightarrow d$ are swapped.
\end{conjecture}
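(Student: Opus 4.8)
The statement is an equivalence, and I would attack the two directions very differently; only the necessity (``only if'') direction seems within reach, while the sufficiency is the genuine obstacle. For necessity, suppose $g\prec^n_{ab}h$ and write $M=P_1\cups\dots\cups P_n\cups\{a,b\}$ with $P_i=\{c_i,d_i\}$, where $c_i,d_i$ are the copies of $c,d$ on page $i$ under the bijection $\phi_i$. By Proposition \ref{proposition:symmetric} I may assume $h$ is symmetrical, so that any two subsets of $M$ related by a page permutation carry the same $h$-value; and it is convenient first to reduce to tight polymatroids, since by Theorem \ref{thm:balanced} the extension hypothesis passes to $g\dn N$ and the book inequalities are invariant under tightening, so nothing is lost in assuming $g$ tight and dropping the monotonicity boundary terms. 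The plan is then to express the left-hand side of each book inequality, evaluated at $g$, purely in terms of $h$. This is possible because each bracket translates page-by-page: $[abcd\,]\cdot g=[\,a\,b\,c_i\,d_i\,]\cdot h$ and $(a,c\|b)\cdot g=(a,c_i\|b)\cdot h$, and so on, for every $i$, while the extension hypothesis supplies the vanishing relations $(P_{\langle i\rangle},P_{\langle j\rangle}\|ab)\cdot h=0$ for disjoint index sets.

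The heart of the argument is a two-dimensional bookkeeping that explains the binomial weights $\binom{k+\ell}{k}$ in $\vv_{k,\ell}$. I would arrange the copies along the lattice $\N\times\N$, using one coordinate to index successive copies of $c$ (conditioned on $b$) and the other to index successive copies of $d$ (conditioned on $a$); a lattice point $\langle k,\ell\rangle$ with $k+\ell\le n-2$ corresponds to a configuration realizable inside an $n$-page book, which is exactly the budget encoded in $t_n$. Each elementary step increasing one coordinate by one is a single submodularity of $h$; a monotone lattice path from the origin to $\langle k,\ell\rangle$ is a sequence of such steps, and there are $\binom{k+\ell}{k}$ of them, corresponding to the orderings in which the $c$- and $d$-copies may be interleaved. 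Adding all these submodularities over every path and summing over the downward-closed region $s$, the telescoping collapses the intermediate conditional-information terms, the independence relations kill the cross terms, and what survives is precisely inequality (\ref{eq:book1}) with coefficient vector $\vv_s$. Inequality (\ref{eq:book2}) arises from the analogous one-dimensional iteration based on the second Zhang--Yeung instance $[bdac]+(a,b\|d)+(b,d\|a)+(a,d\|b)\ge0$, copied $\ell$ times in the $d$-direction, which produces the triangular coefficient $\ell(\ell+1)/2$; the $a\leftrightarrow b$, $c\leftrightarrow d$ swapped versions required by the conjecture then follow by relabeling, since the extension hypothesis is invariant under the spine stabilizer. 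Because every summand is a submodularity of $h$ (hence non-negative) or an independence equality (hence zero), the book inequality for $g$ follows. The delicate point here is not any single inequality but the verification that the path-counting produces $\vv_s$ exactly, rather than merely a non-negative multiple of something comparable; this is the step I expect to absorb most of the technical effort.

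Sufficiency is the real obstacle, and I do not expect to close it in general --- the paper itself leaves it open and only confirms it by computation up to $n=9$. The natural route is Fourier--Motzkin elimination, i.e.\ linear-programming duality: the polymatroids admitting an $n$-page book extension over $ab$ form the projection to the $g$-coordinates of the polyhedral cone cut out by submodularity of $h$ together with the total-independence equalities, so this projection is again a polyhedral cone with some finite facet description. Necessity shows that $\B_n$ and its swaps lie among the valid inequalities of that cone; sufficiency is the far stronger assertion that $\B_n$ already generates the entire facet description. To prove it one would have to show that every extreme ray of the projected cone is tight for one of the book inequalities, or dually, that any $g$ satisfying all of $\B_n$ admits a certificate assembling an explicit extension $h$ (for instance a feasible dual solution for the projection built from the $\vv_s$). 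The difficulty is that the redundancies among the $\vv_s$ already visible in Figure \ref{fig:conjecture-figure}, together with the need for a description uniform in $n$, make the gap between ``valid'' and ``facet-complete'' hard to bridge; this is exactly where a genuinely new idea appears to be needed, and why I would present the necessity direction as a theorem and leave the converse as the stated conjecture.
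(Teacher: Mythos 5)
You are right that only the necessity half of this statement is within reach: the paper itself proves necessity as Theorem \ref{thm:book} and leaves sufficiency open (verified computationally for $n\le 9$), so your decision to present one direction as a theorem and keep the converse as a conjecture matches the paper exactly. Your architecture for necessity is also the paper's: symmetrize the extension via Proposition \ref{proposition:symmetric}, index configurations of copied elements by lattice points (the sets $c^kd^\ell$), and telescope a one-step inequality over the downward-closed set $s$ with weights $\binom{k+\ell}{k}$ whose cancellation is Pascal's rule. (The preliminary reduction to tight polymatroids you propose is harmless but unnecessary; the paper does not use it in this proof.)

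The gap is in the one-step inequality itself. You assert that ``each elementary step increasing one coordinate by one is a single submodularity of $h$,'' and that summing these over lattice paths yields (\ref{eq:book1}) with coefficient vector $\mathbf{v}_s$. That cannot be literally right: the vector $\mathbf{v}_{k,\ell}=\binom{k+\ell}{k}\langle 1,k+1,\ell\rangle$ records that the step at $\langle k,\ell\rangle$ must consume one instance of $[abcd\,]$, $k+1$ instances of $\C=(a,c\|b)+(b,c\|a)$ and $\ell$ instances of $\D=(a,d\|b)+(b,d\|a)$, which no single submodularity of $h$ can account for. In the paper the elementary step is Lemma \ref{lemma:crucial}: the inequality $\I(k,\ell)$ has left-hand side $[abcd\,]+k\C+\ell\D+(a,b\|c^kd^\ell)$ and right-hand side $(a,b\|c^{k+1}d^\ell)+(a,b\|c^kd^{\ell+1})+(c_n,d_n\|c^kd^\ell)$, and it is the weighted sum $\sum_{\langle k,\ell\rangle\in s}\binom{k+\ell}{k}\,\I(k+1,\ell)$ that makes the intermediate $(a,b\|c^{k+1}d^\ell)$ terms cancel by Pascal's rule and leaves exactly the coefficients $\mathbf{v}_s$. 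Moreover $\I(k,\ell)$ is itself a sum of seven inequalities, several of which are the nontrivial rank estimates of Claims \ref{claim:ab}--\ref{claim:second} (for instance $ad+k(ac-a)+\ell(ad-a)\ge ac^kd^{\ell+1}$ and $-cd-k(abc-ab)-\ell(abd-ab)\ge -c^kd^\ell(cd)^1$), each of which needs the total independence of the pages and an induction on $k+\ell$. Until you state and prove that step inequality --- and the analogous one-dimensional step $\J(\ell)$ behind (\ref{eq:book2}) --- the ``delicate point'' you flag is not a verification to be absorbed but the actual content of the proof.
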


The condition of this conjecture is necessary; this will be proved 
in the next section as Theorem \ref{thm:book}. Sufficiency has been
checked by a computer program for $n\le 9$. The technique used can 
be outlined as follows.
The ground set of the $n$-page book extension of $abcd$ has $2+2n$ elements,
thus the polymatroid is an element of the $2^{2+2n}-1$-dimensional Euclidean
space. The region of polymatroids is a convex polyhedral cone bounded by
half-planes corresponding to submodularity and monotonicity. The collection
of $n$-page book extensions is a sub-cone $\mathcal P$ cut out by the 
requirement that all pullbacks are isomorphic, and the pages are 
independent over the spine. These requirements can also be expressed as 
linear constraints, thus $\mathcal P$ is also polyhedral. A polymatroid on
$abcd$ has an $n$-page book extension if and only if it is in the projection
of $\mathcal P$ to the 15-dimensional subspace corresponding to the
non-empty subsets of $abcd$. The characterizing inequalities are just the
equations of the facets of the projection. Finding these facets is the
subject of multiobjective optimization. To be applicable in practice, the
problem dimension should be reduced significantly. This reduction comes from
several sources. By Proposition \ref{proposition:symmetric}, we can 
assume the book extension be symmetric, this alone drops the dimension of
$\mathcal P$ significantly from $2^{2+2n}-1$ to around $(n+1)^3$. Further
reduction is achieved from the independence of pages, from the
$a\leftrightarrow b$ and $c\leftrightarrow d$ symmetries, from the 
sufficiency of considering tight polymatroids only, and by cutting
$\mathcal P$ into several well-chosen pieces. Table \ref{table:sufficiency} 
\begin{table}[htb]
\begin{center}\begin{tabular}{|c|r@{\,$\times$\,}l|r|}
\hline
\rule{0pt}{2.2ex}$n$&\multicolumn{2}{c|}{Size}& Time \\
\hline
\rule{0pt}{2.4ex}4 &  105 & 692 &  1\\
5 & 168 & 1312 & 55 \\
6 & 252 & 2189 & 9:38 \\
7 & 360 & 3387 & 2:18:45 \\
8 & 495 & 4942 & 6:55:40 \\
9 & 660 & 6932 & 12:53:40 \\
\hline
\end{tabular}\end{center}
\caption{Checking sufficiency: problem size and running
time}\label{table:sufficiency}
\end{table}
shows, as a function of $n$,
the size of the reduced problem: its dimension and the number of linear 
constraints in that dimension. The last column contains the
running time in hours, minutes and seconds required to generate the
facets of the projection on a stand-alone workstation running a highly
optimized algorithm.

We conjecture that the book inequalities do give a sufficient condition for
the existence of an $n$-page book extension.

%%%%%%%%%%%%%%%%%%%%%%%%%%%%%%%%%%%%%%%%%%%%%%%%%%%%%%%%%%%%%%%%%%%%%%
%% 5555555555555555555555555555555555555555555555555
%%%%%%%%%%%%%%%%%%%%%%%%%%%%%%%%%%%%%%%%%%%%%%%%%%%%%%%%%%%%%%%%%%%%%

\section{Necessity of book inequalities}\label{sec:book-proof}

The aim of this section is to prove that the condition in the book conjecture
is necessary.
\begin{theorem}\label{thm:book}
Suppose the polymatroid $g$ on the four element set $abcd$ has an $n$-page
book extension $h$ at $ab$. Then $g$ satisfies all inequalities in $\B_n$
and their versions where the variables $a\leftrightarrow b$ and 
$c\leftrightarrow d$ are swapped.
\end{theorem}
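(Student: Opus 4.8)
The plan is to exploit the $n$-page book extension $h$ directly, using the total independence of the pages over the spine $S=\{a,b\}$ together with the submodularity and monotonicity inequalities that $h$ must satisfy as a polymatroid. The key observation is that each page $P_i$ carries a copy of $g$ via the isomorphism $\phi_i$, so every term appearing in the book inequalities—$[abcd\,]$, $(a,b\|c)$, $(a,c\|b)+(b,c\|a)$, and $(a,d\|b)+(b,d\|a)$—can be read off $h$ by choosing appropriate representatives of the elements $c$ and $d$ from various pages. Since the inequalities in $\B_n$ are indexed by downward-closed subsets $s\in\Sn$ with coefficients $\vv_s=\langle x_s,y_s,z_s\rangle$, and since $\vv_s$ is a sum of the atomic vectors $\vv_{k,\ell}=\binom{k+\ell}{k}\langle 1,k+1,\ell\rangle$, the natural strategy is to prove an \emph{atomic} inequality attached to each lattice point $\langle k,\ell\rangle$ and then sum over $s$.

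The first step is to set up notation: by Proposition \ref{proposition:symmetric} I may assume $h$ is symmetrical, and by Theorem \ref{thm:balanced} I may assume $g$ is tight, which simplifies the boundary terms. I would write $c_i,d_i\in P_i$ for the copies of $c$ and $d$ on page $i$, with $a,b$ living on the common spine. The combinatorial content of the coefficient $\binom{k+\ell}{k}$ strongly suggests that the atomic inequality for $\langle k,\ell\rangle$ is obtained by comparing the $h$-value of a set built from $k$ copies of $c$ and $\ell$ copies of $d$ (drawn from distinct pages) against the same quantity with one fewer copy, and that the binomial weight arises from counting the ways to interleave these copies—equivalently, from iterating the elementary submodular/independence inequality along a monotone lattice path from the origin to $\langle k,\ell\rangle$. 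Concretely, I would establish a recursive estimate expressing the increment in $h$ when adjoining one more $c$-copy or one more $d$-copy in terms of the expressions $[abcd\,]$, $(a,c\|b)+(b,c\|a)$, and $(a,d\|b)+(b,d\|a)$ evaluated on $g$, using total independence to collapse the spine-anchored terms and submodularity to bound the cross terms.

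Once the atomic inequalities are in hand, the inequalities in (\ref{eq:book1}) follow by summing over $\langle k,\ell\rangle\in s$, the downward-closedness of $s$ guaranteeing that the telescoping is valid (so that only the extremal terms survive and contribute the stated coefficients $\vv_s$). The separate family (\ref{eq:book2}) with the $bdac$ Ingleton term and the quadratic coefficient $\ell(\ell+1)/2$ I expect to obtain from the same setup but using only $d$-copies along the $v$-axis, exploiting the symmetry of $\vv_s$ noted in the text ($\vv_{s^T}=\langle x_s, z_s+x_s, y_s-x_s\rangle$) to pass between the roles of $c$ and $d$. The $a\leftrightarrow b$ and $c\leftrightarrow d$ versions are then immediate, since these swaps fix the spine and therefore preserve the existence of the book extension, hence all the derived inequalities.

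The main obstacle, I expect, is getting the binomial coefficients to emerge with exactly the right multiplicities and verifying that the summation over a downward-closed $s$ produces precisely $\vv_s$ rather than some over- or under-count. This is essentially a bookkeeping problem about monotone lattice paths and the inclusion–exclusion governing which page-copies are counted, and it is the step where a subtle miscount would break the proof—indeed the excerpt already flags that a misprint in (\ref{eq:book2}) was caught after the fact. I would therefore invest the most care in pinning down the single-increment inequality and its weight, and in checking the base cases $s=\{\langle0,0\rangle\}$ (recovering Mat\'u\v s' $\B_2$) and $s=u_n,v_n,t_n$ (recovering the known families) as consistency tests before asserting the general summation.
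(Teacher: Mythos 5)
Your plan is essentially the paper's own proof: the ``atomic'' single-increment inequality you postulate for each lattice point $\langle k,\ell\rangle$ is exactly the paper's Lemma~\ref{lemma:crucial} (the inequalities $\I(k,\ell)$ and $\J(\ell)$, each proved by summing a short explicit list of submodularity and independence inequalities for the symmetric extension), and the conclusion is obtained precisely as you describe, by taking $\sum_{\langle k,\ell\rangle\in s}\binom{k+\ell}{k}\,\I(k+1,\ell)$ and using downward-closedness of $s$ together with Pascal's identity to telescope the conditional terms, with (\ref{eq:book2}) handled by summing $\J(1),\dots,\J(\ell)$ along the $d$-axis. The only substantive piece not pinned down in your outline is the exact form of that increment inequality --- in particular that the telescoping happens in the terms $(a,b\|c^kd^\ell)$ --- but your strategy for finding and using it matches the paper.
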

\begin{proof}
As it was remarked earlier, it is enough to show that $g$ satisfies the inequalities
in $\B_n$ as the symmetric versions follow by applying $\B_n$ to the permuted
instances of $g$.

As $h$ is an $n$-page book extension of $g$, the ground set of $h$ is the 
disjoint union $M=P_1\cups\dots\cups P_n\cups \{a,b\}$. We let
$P_i=\{c_i,d_i\}$, where $c_i$, $d_i$ are the twins of
$c$, $d$, respectively. For non-negative integers $k$, $\ell$ and $m$ where
$k+\ell+m\le n$, let $c^kd^\ell(cd)^m$ denote the following subset
of $M$:
\begin{alignat*}{3}
    c^kd^\ell(cd)^m \eqdef\; 
   \{ &c_1,\dots,c_k, \, &&c_{k+\ell+1},\dots,c_{k+\ell+m}, \\
      &d_{k+1},\dots,d_{k+\ell},\:&&d_{k+\ell+1},\dots,d_{k+\ell+m} \},
\end{alignat*}
that is, we pick $c_i$ from the first $k$ pages, $d_i$ from the next $\ell$
pages, and both $c_i$ and $d_i$ from the following $m$ pages. When any of $k$,
$\ell$, or $m$ is zero, we leave out the corresponding term from the notation.
According to Proposition \ref{proposition:symmetric}, $h$ can be assumed to
be symmetric, that is, the value of $h(I)$ depends only
on whether $a$ and $b$ are in $I$, and in how many pages $I$ intersects
$P_i$ in the empty set, in $c_i$, in $d_i$, or in $c_id_i$. Consequently 
$h(I)$ is equal to one of the values $h(X)$, $h(aX)$, $h(bX)$ or $h(abX)$, 
where $X=c^kd^\ell(cd)^m$ for some triplet $k,\ell,m$.

To simplify the notation, in the rest of this section we omit the symbols $g$ and $h$
before the subsets of $abcd$ and $M$; any subset also 
denotes the value of the corresponding polymatroid. As $g$ and $h$ 
agree on subsets of $ab$, this convention is unambiguous.

First we prove some easy propositions.
\begin{claim}\label{claim:ab}
{\rm a)} If $k+\ell\le n$, then
$$
abc^kd^\ell = ab+k\cdot(abc-ab) + \ell\cdot(abd-ab);
$$
{\rm b)} if $k+\ell\le n-1$, then 
$$abc^kd^\ell(cd)^1 = ab(cd)^1 +k\cdot(abc-ab) + \ell\cdot(abd-ab).
$$
\end{claim}
\begin{proof}
By induction on $k+\ell$.
Both statements are true when $k=\ell=0$. Assume $k+\ell<n$. 
As $c_{n}$ is independent of $c^kd^\ell$ over $ab$, that is,
$(c_n,c^kd^\ell \| ab) = 0$, and as $abc_n=abc$, $abc_nc^kd^\ell =
abc^{k+1}d^\ell$, we know that
$$
     abc^{k+1}d^\ell - abc^kd^\ell = abc-ab,
$$
and similarly for the other three cases. This concludes the
induction step.
\end{proof}

\begin{claim}\label{claim:firstcase}
If $k+\ell\le n$, then
\begin{align*}
k\cdot(ac-a) + \ell\cdot(ad-a) + a &\ge ac^kd^\ell, \\
k\cdot(bc-b) + \ell\cdot(bd-b) + b &\ge bc^kd^\ell .
\end{align*}
\end{claim}
\begin{proof}
The claims are true with equality when $k=\ell=0$. As 
$(c_n,c^kd^\ell \|a)\ge 0$ and $ac_n=ac$, $c_nc^kd^\ell=c^{k+1}d^\ell$,
we know that $ac - a \ge ac^{k+1}d^\ell - ac^kd^\ell$. Using this fact
and three other similar inequalities we arrive at the claim by induction on 
$k+\ell$.
\end{proof}

\begin{claim}\label{claim:second}
If $k+\ell<n$, then
\begin{align*}
    c^kd^\ell(cd)^1 &\ge cd+k(abc-ab) + \ell(abd-ab),\\
    bd^\ell(cd)^1 &\ge bcd + \ell(abd-ab), \\
    acd^\ell &\ge ac + \ell(abd-ab).
\end{align*}
\end{claim}
\begin{proof}
By submodularity, $c^kd^\ell(cd)^1 - (cd)^1 \ge abc^kd^\ell(cd)^1 - ab(cd)^1$. This,
and part b) of Claim \ref{claim:ab} give the first inequality. The other
inequalities can be proved in a similar way.
\end{proof}

The next lemma describes the crucial inequality that allows us to prove
that $g$ satisfies the inequalities in $\B_n$. The symbols $\C$, $\D$ will be 
used to denote the following entropy expressions:
\begin{align*}
\C &= (a,c\|b) + (b,c\|a), \\
\D &= (a,d\|b) + (b,d\|a) .
\end{align*}

\begin{lemma}\label{lemma:crucial}
For non-negative integers $k$ and $\ell$ where $k+\ell<n$,
\begin{align}
& [abcd\,] + k\C+\ell\D + (a,b\|c^kd^\ell) \ge{}\label{eq:crucial1}\\
& ~~~~ (a,b\|c^{k+1}d^\ell) + (a,b\|c^kd^{\ell+1}) + (c_n,d_n|c^kd^\ell),
      \nonumber
\end{align}
and
\begin{align}
&  [bdac\,] + \ell\D + (a,b\|d^\ell) \ge {}\label{eq:crucial2}\\
&~~~~ (a,b\|d^{\ell+1})+(a,c\|d^\ell)+(b,d_n\|c_nd^\ell). \nonumber
\end{align}
\end{lemma}
Before proving this lemma, let us see how it implies Theorem \ref{thm:book}.
Denote the inequality (\ref{eq:crucial1}) by $\I(k,\ell)$, and the
inequality (\ref{eq:crucial2}) by $\J(\ell)$.
First let $s\in \Sn$ and $\vv_s=\langle x_s,y_s,z_s\rangle$, we want to show 
inequality (\ref{eq:book1}), which can be written as
\begin{equation}\label{eq:book-first}
    x_s[abcd\,] + (a,b\|c) + y_s\C + z_s\D \ge 0 .
\end{equation}
Consider the following combination of the inequalities
in (\ref{eq:crucial1}) over the elements of the downward closed set $s$:
$$
    \sum_{\langle k,\ell\rangle\in s}\:  \binom{k+\ell}{k} \I(k+1,\ell) .
$$
On the left hand side of the $\ge$ sign we have $x_s$, $y_s$, and $z_s$-many 
instances of $[abcd\,]$, $\C$, and $\D$, respectively; and we also have
$(a,b\|c)$ from $\I(1,0)$. If $k+\ell\ge 1$, then
$(a,b\|c^{k+1}d^\ell)$ occurs $\binom{k+\ell}{k}$ many times on the left
hand side, and, as $s$ is downward closed, $\binom{k-1+\ell}{k-1} +
\binom{k+\ell-1}{k}$ times on the right hand side, thus they cancel out. All
remaining items on the right hand side are non-negative, which proves
inequality (\ref{eq:book-first}).

The book inequality (\ref{eq:book2}) requires us to show
\begin{equation}\label{eq:booksecond}
    \ell[bdac\,] + (a,b\|d) + \frac{\ell(\ell+1)}2\,\D \ge 0
\end{equation}
for every $\ell<n$. Summing up the inequalities $\J(1)$, $\J(2)$, \dots,
$J(\ell)$
we get an inequality where the left hand side equals that of 
(\ref{eq:booksecond}), and where terms on the right hand side
are non-negative.
\end{proof}

\begin{proof}[Proof of Lemma \ref{lemma:crucial}]
To arrive at inequality (\ref{eq:crucial1}) sum up the inequalities in the
list below, and rearrange. The last column indicates why the inequality
holds: SM stands for submodularity, and numbers refer to the
corresponding Claim:
\newcommand\lbel[1]{\hbox to 1em{\hss\normalfont\footnotesize #1}}
$$\begin{array}{r@{\,\;}c@{\,\;}l@{\qquad}l}
  ac - a \,+\, ac^kd^\ell &\ge& ac^{k+1}d^\ell, 
     &\lbel{SM}\\[4pt]
  bd - b \,+\, bc^kd^\ell &\ge& bc^kd^{\ell+1}, 
     &\lbel{SM}\\[4pt]
 -abd + ab -abc^kd^\ell &=& -abc^kd^{\ell+1}, 
     &\lbel{\ref{claim:ab}} \\[4pt]
 - abc  - k(abc-ab) - \ell(abd-ab) &=&  - abc^{k+1}d^\ell, 
      &\lbel{\ref{claim:ab}} \\[4pt]
  ad + k(ac-a)+\ell(ad-a) &\ge& ac^kd^{\ell+1},
      &\lbel{\ref{claim:firstcase}} \\[4pt]
  bc + k(bc-b)+\ell(bd-b) &\ge& bc^{k+1}d^\ell,
      &\lbel{\ref{claim:firstcase}} \\[4pt]
 -cd -k(abc-ab)-\ell(abd-ab) &\ge& -c^kd^\ell(cd)^1 .
      &\lbel{\ref{claim:second}}
\end{array}$$
Similarly, inequality (\ref{eq:crucial2}) follows from the sum of the
inequalities in the list below:
$$\begin{array}{r@{\,\;}c@{\,\;}l@{\qquad}l}
  cd - d &\ge& d^\ell(cd)^1 - d^{\ell+1}, 
     &\lbel{SM}\\[4pt]
  bd - b \,+\, bd^\ell &\ge& bd^{\ell+1}, 
     &\lbel{SM}\\[4pt]
 -abd + ab -abd^\ell &=& -abd^{\ell+1}, 
     &\lbel{\ref{claim:ab}} \\[4pt]
  ad + \ell(ad-a) &\ge& ad^{\ell+1},
      &\lbel{\ref{claim:firstcase}} \\[4pt]
  bc + \ell(bd-b) &\ge& bcd^\ell,
      &\lbel{\ref{claim:firstcase}} \\[4pt]
 -ac -\ell(abd-ab) &\ge& -acd^\ell ,
      &\lbel{\ref{claim:second}} \\[4pt]
 -bcd -\ell(abd-ab) &\ge& -bd^\ell(cd)^1 .
      &\lbel{\ref{claim:second}}
\end{array}$$
\end{proof}

\section*{Acknowledgment}
The author would like to acknowledge the numerous insightful, fruitful, and
enjoyable discussions with Frantisek Mat\'u\v s on the entropy function,
matroids, and on the ultimate question of everything.

A preliminary version of this paper was presented at the First Workshop on 
Entropy and Information Inequalities, held in Hong Kong, April 15--17, 2013. 
The author
would like to express his gratitude to the organizers for their hospitality
and to the participants for the fruitful discussions.

\begin{IEEEbiographynophoto}{L\'aszl\'o Csirmaz}
has been with Central European University, Budapest, since 1996. Before that
he worked as a researcher at the R\'enyi Institute of Mathematics, Budapest.
His main research interests include secret sharing, Shannon theory, and
combinatorial games.
\end{IEEEbiographynophoto}

\end{document}